\DeclareMathAlphabet{\mathpzc}{OT1}{pzc}{m}{it}
\tikzset{fontscale/.style = {font=\relsize{#1}}}
\title{Regular $\omega$-Languages with an Informative Right Congruence}
\author{Dana Angluin
\institute{Yale University}
\email{angluin@cs.yale.edu}
\and
Dana Fisman 
\institute{Ben-Gurion University}
\email{dana@cs.bgu.ac.il}
}
\newtheorem{theorem}{Theorem}
\newtheorem{proposition}[theorem]{Proposition}
\newtheorem{corollary}[theorem]{Corollary}
\newtheorem{definition}[theorem]{Definition}
\newtheorem{lemma}[theorem]{Lemma}
\newenvironment{proof}{\paragraph{Proof.}}{\hfill$\square$ \\}
\newcommand{\cmark}{\ding{51}}%
\newcommand{\xmark}{\ding{55}}%
\newcommand\Red[1]{\textcolor[rgb]{.6875,0,0}{#1}}
\newcommand{\commentout}[1]{}
\newcommand{\tbd}[1]{{\Red{\textsl{\textbf{TBD: }{#1}}}}}
\newcommand{\la}{\langle}
\newcommand{\ra}{\rangle}
\newcommand{\Lang}[1]{{\llbracket #1 \rrbracket}}
\newcommand{\sema}[1]{{\Lang{#1}}}
\newcommand{\fdfa}{{\textsc{fdfa}}}
\newcommand{\scc}{{\textsc{scc}}}
\newcommand{\dfa}{{\textsc{dfa}}}
\newcommand{\dba}{{\textsc{dba}}}
\newcommand{\nba}{{\textsc{nba}}}
\newcommand{\dca}{{\textsc{dca}}}
\newcommand{\nca}{{\textsc{nca}}}
\newcommand{\dma}{{\textsc{dma}}}
\newcommand{\dta}{{\textsc{dta}}}
\newcommand{\dpa}{{\textsc{dpa}}}
\newcommand{\nma}{{\textsc{nma}}}
\newcommand{\nta}{{\textsc{nta}}}
\newcommand{\npa}{{\textsc{npa}}}
\newcommand{\buchi}{B$\ddot{\textrm{u}}$chi}
\newcommand{\muller}{Muller}
\newcommand{\tmuller}{Transition-Table}
\newcommand{\cobuchi}{co-B$\ddot{\textrm{u}}$chi}
\newcommand{\DB}{\ensuremath{\mathbb{DB}}}
\newcommand{\DC}{\ensuremath{\mathbb{DC}}}
\newcommand{\DM}{\ensuremath{\mathbb{DM}}}
\newcommand{\NB}{\ensuremath{\mathbb{NB}}}
\newcommand{\NC}{\ensuremath{\mathbb{NC}}}
\newcommand{\NP}{\ensuremath{\mathbb{NP}}}
\newcommand{\NM}{\ensuremath{\mathbb{NM}}}
\newcommand{\NT}{\ensuremath{\mathbb{NT}}}
\newcommand{\DP}{\ensuremath{\mathbb{DP}}}
\newcommand{\DT}{\ensuremath{\mathbb{DT}}}
\newcommand{\NN}{\mathbb{N}}
\newcommand{\class}[1]{\ensuremath{\mathbb{#1}}}
\newcommand{\IM}{\ensuremath{\mathbb{IM}}}
\newcommand{\IT}{\ensuremath{\mathbb{IT}}}
\newcommand{\IB}{\ensuremath{\mathbb{IB}}}
\newcommand{\IP}{\ensuremath{\mathbb{IP}}}
\newcommand{\IC}{\ensuremath{\mathbb{IC}}}
\newcommand{\RT}{\ensuremath{\mathbb{RT}}}
\newcommand{\RM}{\ensuremath{\mathbb{RM}}}
\newcommand{\RB}{\ensuremath{\mathbb{RB}}}
\newcommand{\RP}{\ensuremath{\mathbb{RP}}}
\newcommand{\RC}{\ensuremath{\mathbb{RC}}}
\newcommand{\lstar}{\ensuremath{{L^*}}}
\newcommand{\aut}[1]{{\mathpzc{#1}}}
\newcommand{\A}{{\aut{A}}}
\newcommand{\F}{{{\aut{F}}}}
\newcommand{\R}{{\aut{R}}}
\newcommand{\B}{{{\aut{B}}}}
\newcommand{\C}{{{\aut{C}}}}
\newcommand{\D}{{{\aut{D}}}}
\newcommand{\N}{{{\aut{N}}}}
\newcommand{\Q}{{{\aut{Q}}}}
\newcommand{\M}{{\aut{M}}}
\newcommand{\T}{{{\aut{T}}}}
\renewcommand{\P}{{{\aut{P}}}}
\renewcommand{\T}{{{\aut{T}}}}
\newcommand{\rightaut}{rightcon}
\newcommand{\initstate}{\lambda}
\newcommand{\naturals}{\NN}
\begin{document}
\maketitle

\begin{abstract}
	A regular language is almost fully characterized by its right congruence relation. Indeed, a regular language can always be recognized by a \dfa\ isomorphic to the automaton corresponding to its right congruence, henceforth the \emph{rightcon automaton}. The same does not hold for regular $\omega$-languages. The right congruence of a regular  $\omega$-language is not informative enough; many regular $\omega$-languages have a trivial right congruence, and in general it is not always possible to define an $\omega$-automaton recognizing a given language that is isomorphic to the rightcon automaton. 

	The class of \emph{weak regular $\omega$-languages} does have an informative right congruence. That is, any weak regular $\omega$-language can always be recognized by a deterministic \buchi\ automaton that is isomorphic to the rightcon automaton. 
	Weak regular $\omega$-languages reside in the lower levels of the expressiveness hierarchy of regular $\omega$-languages. Are there more expressive sub-classes of regular $\omega$-languages that have an informative right congruence? 
	Can we fully characterize the class of languages with a trivial right congruence? In this paper we try to place some 
	additional pieces of this big puzzle.

\end{abstract}

\section{Introduction}
	Regular $\omega$-languages play a key role in reasoning about reactive systems. Algorithms for verification and synthesis of reactive system typically build on the theory of $\omega$--automata. The theory of $\omega$-automata enjoys many properties that the theory of automata on finite words enjoys. These make it amenable for providing the basis for analysis algorithms (e.g. model checking algorithms rely on the fact that emptiness can be checked in nondeterministic logarithmic space). However, in general, the theory of $\omega$-automata is much more involved than that of automata on finite words, and many fundamental questions, such as minimization, are still open. 

One of the fundamental theorems of regular languages on finite words is the Myhill-Nerode theorem stating a one-to-one correspondence between the state of the minimal deterministic finite automaton (\dfa) for a language $L$ and the equivalence classes of the right congruence of $L$.\footnote{Formal definitions are deferred to Section~\ref{sec:prelim}.}  When moving to $\omega$-words, there is no similar theorem, and there are many regular $\omega$-languages where any minimal automaton requires more states than the number of equivalence classes in the right congruence of the language. For instance, consider the $\omega$-language $L=(a+b)^*a^\omega$. Its right congruence has only one equivalence class. That is, for any finite  words $x$ and $y$ and any $\omega$-word $w$ we have that $xw\in L$ iff $yw \in L$ as membership in $L$ is determined only by the suffix. We say that the right congruence for $L$ is not \emph{informative enough}.

The tight relationship between the equivalence classes of the right congruence and the states of a minimal \dfa\ is at the heart of minimization and learning algorithms for regular languages of finite words, and seems to be a severe obstacle in obtaining efficient minimization and learning algorithms for regular $\omega$-languages. For this reason, we set ourselves to study classes of regular $\omega$-languages that do have a right congruence that is fully informative. 

Several acceptance criteria are in use for $\omega$-automata, in particular, \buchi, \cobuchi, Muller and Parity. There are differences in the expressiveness of the corresponding deterministic automata. We use $\DB$, $\DC$, $\DP$ and $\DM$ to denote the classes of languages accepted by deterministic \buchi, \cobuchi, Muller and Parity automata, respectively. We also consider a version of Muller automata where acceptance is defined using transitions, refer to this acceptance criterion as Transition-Table, and use $\DT$ for the corresponding class of languages. The classes $\DT$, $\DM$ and $\DP$ accept all regular $\omega$-languages whereas $\DB$ and $\DC$ are strictly less expressive and are dual to each other. The intersection of $\DB$ and $\DC$ is the class of \emph{weak} regular $\omega$-languages. This class does have the property that any language in $\DB\cap \DC$ has a fully informative right congruence.  The regular $\omega$-languages can be arranged in an infinite hierarchy of expressive power as suggested by Wagner~\cite{Wagner75} and the class $\DB\cap \DC$ corresponds to one of the lowest levels of the hierarchy.

We define the classes \class{IB}, \class{IC}, \class{IP}, \class{IM}, \class{IT} to be the class of regular $\omega$-languages that can be accepted by a \buchi, \cobuchi, Parity and Muller and Transition-Table automata, respectively, whose number of states equals the number of equivalence classes in the right congruence of the language. We show that these form a strictly inclusive hierarchy of expressiveness as shown in Fig.~\ref{fig:ibicimip-relations} on the left, and moreover in every class of the infinite Wagner hierarchy of regular $\omega$-languages there are languages whose right congruence is fully informative. 

Noting another difficulty in inferring a regular $\omega$-language from examples of $\omega$-words in the language, we consider a further restriction on languages, that if $ux^\omega$ is accepted by a minimal automaton for the language, then that automaton has a loop of size at most $|x|$ in which $ux^\omega$ loops. We term this property, being \emph{respective} of the right congruence.
This property is reminiscent of the property of being non-counting~\cite{DiekertG08}, and we show that a language that is non-counting is respective of its right congruence but the other direction does not necessarily hold.
We define the classes  \class{RB}, \class{RC}, \class{RP}, \class{RM}, \class{RT} of languages in \class{IB}, \class{IC}, \class{IP}, \class{IM}, \class{IT} that are respective of their right congruence. We show that these classes constitute a further restriction in terms of expressive power as shown in Fig.~\ref{fig:RXinclusions} on the right, and yet here as well, in every class of the infinite Wagner hierarchy of regular $\omega$-languages there are languages which are respective of their fully informative right congruence.

\begin{figure}[h]
	\noindent\makebox[\textwidth]{
		\begin{tabular}{l@{\qquad\qquad}r}
			\scalebox{0.6}{
				\def\classib{(-0.6,0) ellipse (4em and 2.5em)}
				\def\classic{(0.6,0) ellipse (4em and 2.5em)}
				\def\classip{(0,0.5) ellipse (7em and 4.5em)}
				\def\classim{(0,0.95) ellipse (10em and 6.3em)}
				\def\classit{(0,1.4) ellipse (13em and 8.1em)}

				\begin{tikzpicture}
				\draw \classib node [below] { };
				\draw \classic node [below] { };
				\draw \classip node [above] { };
				\draw \classim node [above] { };
				\draw \classit node [above] { };				
				
				\node [align=center] at (0,0) (B) {\DB \\[.3em] $\cap$ \\[.3em] \DC};
				\node at (-1.4,0) (B) {\IB};
				\node at (1.4,0) (C) {\IC};
				\node at (0,1.5) (P) {\IP};
				\node at (0,2.6) (M) {\IM};
				\node at (0,3.7) (T) {\IT};				
				\end{tikzpicture}
			}
			& 
			\scalebox{0.6}{
				\def\classib{(-0.6,0) ellipse (4em and 2.5em)}
				\def\classrb{(-0.45,0) ellipse (2.9em and 2.25em)}
				\def\classic{(0.6,0) ellipse (4em and 2.5em)}
				\def\classrc{(0.45,0) ellipse (2.9em and 2.25em)}
				\def\classrp{(0,0.222) ellipse (6.3em and 3.65em)}
				\def\classip{(0,0.5) ellipse (7em and 4.5em)}
				\def\classrm{(0,0.75) ellipse (8.5em and 5.4em)}
				\def\classim{(0,0.95) ellipse (10em and 6.3em)}
				\def\classrt{(0,1.15) ellipse (11.5em and 7.4em)}					
				\def\classit{(0,1.35) ellipse (13em and 8.6em)}

				\begin{tikzpicture}
				\draw \classib node [below] { };
				\draw \classic node [below] { };
				\draw \classip node [above] { };
				\draw \classim node [above] { };
				\draw \classit node [above] { };
				\draw \classrt node [above] { };										
				\draw \classrm node [above] { };
				\draw \classrp node [above] { };
				\draw \classrc node [above] { };
				\draw \classrb node [above] { };

				\node at (-1.85,0) (B) {\IB};
				\node at (-1.25,0) (RB) {\RB};
				
				\node at (1.85,0) (C) {\IC};
				\node at (1.25,0) (RC) {\RC};
				
				\node at (0,2.5) (P) {\IP};
				\node at (0,1.2) (RP) {\RP};
				
				\node at (0,3.1) (M) {\IM};
				\node at (0,1.9) (RM) {\RM};
				
				\node at (0,4.35) (T) {\IT};
				\node at (0,3.7) (RT) {\RT};
				
				\end{tikzpicture}
			}						
		\end{tabular}
	}
	\caption{On the left, a summary of inclusion of the classes of $\omega$-automata that are isomorphic to their rightcon automaton. On the right a summary of inclusion of the classes of $\omega$-automata that are isomorphic to their rightcon automaton, as well as those that are in addition respective of their right congruence.}
	\label{fig:ibicimip-relations}
	\label{fig:RXinclusions}
\end{figure}
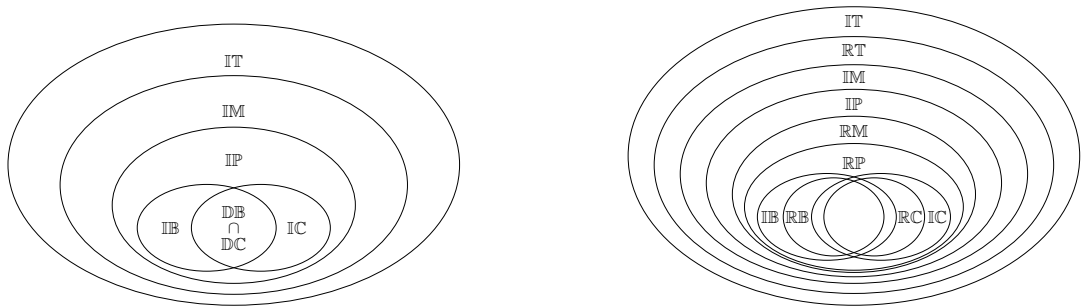

The rest of the paper is organized as follows. In Section~\ref{sec:prelim} we provide the necessary definitions of the $\omega$-automata that we consider and of right congruence,  state the well known results about their expressiveness, and briefly summarize the importance of the relation between the  syntactic right congruence of a language and its minimal acceptor in learning algorithms. In Section~\ref{sec:relations-to-the-right-aut} we state the relations between states of arbitrary $\omega$-automata for a regular $\omega$-language $L$ and its syntactic right congruence. In Section~\ref{sec:trivial} we provide a full characterization of $\omega$-languages $L$ for which $\sim_L$ is trivial. In Section~\ref{sec:isomorphic} we explore expressiveness results related to the classes of languages with an informative right congruence. In Section~\ref{sec:respective} we explore expressiveness results related to the classes of languages that not only have an informative right congruence, but are also respective of their right congruence. In Section~\ref{subsec:closure} we explore closure properties of these classes, and in Section~\ref{sec:discussion} we conclude.
Due to lack of space some proofs are missing, these can be found in the full version of the paper at http://www.cis.upenn.edu/$\sim$fisman/documents/AF_GandALF18_full.pdf

\section{Preliminaries}
\label{sec:prelim}
An \emph{alphabet} $\Sigma$ is a finite set of symbols. The set of finite words over $\Sigma$ is denoted by $\Sigma^*$, and the set of infinite words, termed $\omega$-words, over $\Sigma$ is denoted by $\Sigma^\omega$. We use $\epsilon$ for the empty word and $\Sigma^+$ for $\Sigma^*\setminus\{\epsilon\}$.  A \emph{language} is a set of finite words, that is a subset of $\Sigma^*$, while an $\omega$-language is a set of $\omega$-words, that is a subset of $\Sigma^\omega$. For natural numbers $i$ and $j$ and a word $w$, we use $[i..j]$ for the set $\{i, i+1, \ldots, j\}$, $w[i]$ for the $i$-th letter of $w$, and $w[i..j]$ for the subword of $w$ starting at the $i$-th letter and ending at the $j$-th letter, inclusive.

\paragraph*{Automata and Acceptors}
An \emph{automaton} is a tuple $\A=\la \Sigma, Q, \initstate, \delta \ra$ consisting of an alphabet $\Sigma$, a finite set $Q$ of
states, an initial state $\initstate\in Q$, and a transition function ${\delta: Q \times \Sigma \rightarrow 2^Q}$. 
A run of an automaton on a finite word ${v=a_1 a_2\ldots a_n}$ is a sequence of states ${q_0,q_1,\ldots,q_n}$ such that $q_0=\initstate$, and for each $i\geq 0$, ${q_{i+1}\in\delta(q_i,a_{i+1})}$.  A run on an infinite word is defined similarly and results in an infinite sequence of states.
The transition function is naturally extended to a function from $Q \times \Sigma^*$, by defining $\delta(q,\epsilon)=q$ and ${\delta(q,a v)=\delta(\delta(q,a),v)}$ for ${q\in Q}$, ${a\in\Sigma}$ and ${v\in\Sigma^*}$.
We often use $\A(v)$ as a shorthand for $\delta(\initstate,v)$ and $|\A|$ for the number of states in $Q$.
We use $\A_q$ to denote the automaton $\la \Sigma, Q, q, \delta \ra$ obtained from $\A$ by replacing the initial state with $q$. We say that $\A$ is \emph{deterministic} if $|\delta(q,a)|\leq1$ and \emph{complete} if $|\delta(q,a)|\geq1$, for every $q\in Q$ and $a\in\Sigma$. 

By augmenting an automaton with an acceptance condition $\alpha$, obtaining a tuple $\la \Sigma, Q, \initstate,$ $\delta, \alpha \ra$, we get an \emph{acceptor}, a machine that accepts some words and rejects others. An acceptor 
accepts a word if at least one of the runs on that word is accepting. For finite words the acceptance condition is a set $F \subseteq Q$ and a run on a word $v$ is accepting if it ends in an accepting state, i.e., if $\delta(\initstate,v)$ contains an element of $F$. For infinite words, there are various acceptance conditions in the literature; we consider five: \buchi, \cobuchi, parity, Muller and \tmuller~\cite{Saec90}. 
The \buchi\ and \cobuchi\ acceptance conditions are also a set $F \subseteq Q$. A run of a \buchi\ automaton is accepting if it visits $F$ infinitely often. A run of a co-\buchi\ is accepting if it visits $F$ only finitely many times. A parity acceptance condition is a map $\kappa:Q \rightarrow [1..k]$ (for some $k\in\naturals$) assigning each state a color (or priority). A run is accepting if the minimal color visited infinitely often is odd. A \muller\ acceptance condition is a set of sets of states $\alpha=\{F_1,F_2,\ldots,F_k\}$ for some $k\in\naturals$ and $F_i\subseteq Q$ for $i\in[1..k]$. A run of a \muller\ automaton is accepting if the set $S$ of states visited infinitely often in the run is in $\alpha$. A \tmuller\ acceptance condition is a set $\alpha=\{T_1,T_2,\ldots,T_k\}$ of sets of transitions, where a transition is a tuple in $Q\times \Sigma \times Q$. A run of a \tmuller\ automaton is accepting if the set $T$ of transitions visited infinitely often in the run is in $\alpha$. 
We use $\sema{\A}$ to denote the set of words accepted by a given acceptor $\A$. 

%
\begin{figure}[h]
	\begin{center}
		\noindent\makebox[\textwidth]{
			\scalebox{.6}{
				\begin{tikzpicture}[->,>=stealth',shorten >=1pt,auto,node distance=2.2cm,semithick,initial text=,initial where=left]
				
				\node[label]          (L)                {};
				\node[label]          (UL) [above of=L, node distance=2.6cm, fontscale=1.5]                {$\aut{B}:$};
				
				\node[state]          (A)   [right of=L] {${a}$};
				\node[initial,state,accepting](AB)    [below left  of=A]{${ab}$};
				\node[state]  (E)   [below right of=A]{${\epsilon}$};

				\path (E) edge [right]  node {\textbf{${a}$}} (A); 
				\path (E) edge [loop right] node {\textbf{$\Sigma\setminus\{a\}$}} (E); 
				\path (A) edge [loop above] node {\textbf{$\Sigma\setminus\{b\}$}} (A); 
				\path (A) edge  [left]          node {\textbf{$b$}} (AB); 
				\path (AB) edge [below] node {\textbf{$\Sigma$}} (E);
				
				\node[label]  (I) [above right of=E] {};
				\node[label]          (LL)   [ right of=I, node distance=1cm]            {};
				\node[label] (LLL) [right of=UL, node distance=7cm, fontscale=1.5] {$\aut{C}:$}; 
				\node[initial,state,accepting]          (Q0)   [below right of=LL] {${1}$};
				\node[state] (Q1)    [above  of=Q0]{${2}$};
				
				\path (Q0) edge [loop below] node {$a$} (Q0); 
				\path (Q0) edge  [bend left]          node {$b$} (Q1); 
				\path (Q1) edge [loop above] node {$b$} (Q1); 
				\path (Q1) edge  [bend left]          node {$a$} (Q0);

				\node[label]          (xLL)   [ right of=I, node distance=5cm]            {};
				\node[label] (xLLL) [right of=UL, node distance=11cm, fontscale=1.5] {$\aut{M}:$};
				
				\node[label] (xII) [right of=xLLL, node distance=1.5cm, fontscale=1.5] {$\quad\{ \{1\}, \{2\}\}$};
				\node[initial,state]          (xQ0)   [below right of=xLL] {${1}$};
				\node[state] (xQ1)    [above  of=xQ0]{${2}$};
				
				\path (xQ0) edge [loop below] node {$b$} (xQ0); 
				\path (xQ0) edge  [bend left]          node {$a$} (xQ1); 
				\path (xQ1) edge [loop above] node {$b$} (xQ1); 
				\path (xQ1) edge  [bend left]          node {$a$} (xQ0);

				\node[label]            (Aut)  [right of=xLL, node distance=4cm]             {};
				\node[label]            (AutI)  [above of=Aut]             {};
				
				\node[label]            (PAut)  [right of=UL, node distance=16cm, fontscale=1.5]             {$\P:$};
				\node[label]            (c1)  [right of=PAut, node distance=1.5cm, fontscale=1.5]             {$\kappa(1)=2$};
				\node[label]            (c2)  [below of=c1, node distance=0.6cm, fontscale=1.5]             {$\kappa(2)=1$};
				\node[label]            (c3)  [below of=c2, node distance=0.6cm, fontscale=1.5]             {$\kappa(3)=0$};
				\node[label]            (c4)  [below of=c3, node distance=0.6cm, fontscale=1.5]             {$\kappa(4)=0$};

				\node[initial, state] (A)  [below right of=Aut] {$1$};
				\node[state]          (B)  [right of=A, node distance=1.5cm]   {$2$};
				\node[state]          (C)  [right of=B, node distance=1.5cm]   {$3$};
				\node[state]          (D)  [right of=C, node distance=1.5cm]   {$4$};
				
				\path (A) edge [loop below]          node {$a$} (A)
				edge                       node {$b$} (B)
				(B) edge [bend left=45,above]  node {$a$} (A)
				edge                       node {$b$} (C)
				(C) edge  [bend left=62,above] node {$a$} (A)
				edge                       node {$b$} (D)
				(D) edge [loop above]          node {$\Sigma$} (D);

				\node[label] (TAut)   [right of=UL, node distance=23cm, fontscale=1.5]  {$\aut{T}:$}; 
				\node[label] (TAutL)  [below of=TAut, node distance=.75cm] {$\{ \{(1,a,1) \} \}$};
				\node[initial, state] (tA)  [right of=D, node distance=2.5cm] {$1$};
				
				\path (tA) edge [loop below]          node {$a$} (tA)
				edge [loop above]          node {$b$} (tA);

				\end{tikzpicture}}}
	\end{center}
	\vspace{-3mm}
	\caption{A \dba\ $\B$ accepting $(\Sigma^*a\Sigma^*b)^\omega$ where $\Sigma=\{a,b,c\}$, a \dca\ $\C$ accepting $(a+b)^*b^\omega$, a \dma\ $\M$ accepting $(a+b)^*b^\omega$, a \dpa\ $\P$ accepting $(a + ba + bba)^*(a^*ba)^\omega$, and a \dta\ $\T$ accepting $(a+b)^*a^\omega$.}
	\label{fig:w-aut-examples}
	\label{fig:minimal-buchi-aut-ex}
	\label{fig:minimal-muller-aut-ex}
	\label{fig:dta}
\end{figure}
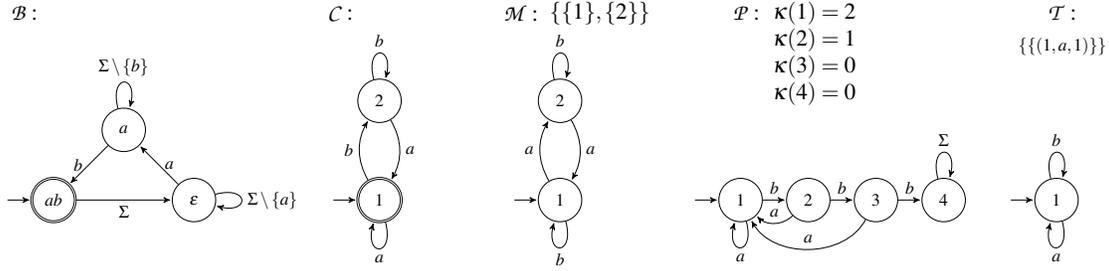 
%
%
We use three letter acronyms to describe acceptors, where the first letter is in $\{\textsc{d},\textsc{n}\}$ and denotes if the automaton is \emph{deterministic} or \emph{nondeterministic}. The second letter is one of $\{\textsc{b,c,p,m,t}\}$ for the first letter of the acceptance condition:  \buchi, \cobuchi, parity, Muller or \tmuller. The third letter is always $\textsc{a}$ for acceptor. Figure~\ref{fig:w-aut-examples} gives examples for a \dba, \dca, \dpa, \dma, and \dta, and specifies the accepted languages.
A language is said to be \emph{regular} if it is accepted by a \dfa. An $\omega$-language is said to be \emph{regular} if it is accepted by a \dma. An $\omega$-language is said to be \emph{weak} if it is accepted by a \dba\ as well as by a \dca.

\paragraph*{Complexity and expressiveness of sub-classes of regular $\omega$-languages}

We use $\DB$, $\DC$, $\DP$, $\DM$ and $\DT$ to denote the classes of languages accepted by \dba, \dca, \dpa, \dma\ and \dta, respectively, and $\NB$, $\NC$, $\NP$, $\NM$ and $\NT$ for the class of languages accepted by \nba, \nca, \npa, \nma\ and \nta, respectively. The classes $\NB$, $\NP$, $\NM$, $\NT$, $\DP$, $\DM$ and $\DT$ are equi-expressive and contain all $\omega$-regular languages. The classes $\DB$ and $\DC$ are strictly less expressive and are dual to each other in the sense that  $L\in\DB$ 
iff $L^c \in \DC$ where $L^c$ is the complement language of $L$, i.e. $\Sigma^\omega\setminus L$. The classes $\NC$ and $\DC$ are equi-expressive.

A subset $S$ of the automaton states, where for every $s_1,s_2\in S$ there exists a string $x \in \Sigma^+$ such that $\delta(s_1,x)=s_2$ is termed an SCC (abbreviating strongly connected component).\footnote{Note that there is no requirement for $S$ to be maximal in this sense.}
A Muller automaton $\M$ can be seen as classifying its SCCs into \emph{accepting} and \emph{rejecting}.
An important measure of the complexity of a Muller automaton is the number of alternations between accepting and rejecting SCCs along an inclusion chain. For instance, the Muller automaton $\M$ in Figure~\ref{fig:w-aut-examples} has an inclusion chain of SCCs with one alternation: $\{1\}$ (accepting) $\subseteq \{1,2\}$ (rejecting); and the Muller automaton $\T$ in Figure~\ref{fig:inIPnotinIM} whose only accepting SCC is $\{1,\lambda\}$ has an inclusion chain of SCCs with two alternations: $\{1\}$ (rejecting) $\subseteq \{1,\lambda\}$ (accepting) $\subseteq \{1,\lambda,0\}$ (rejecting). Wagner~\cite{Wagner75} has shown that this complexity measure is language-specific and is invariant over all Muller automata accepting the same language. Under this view, $\DB$ is the class of languages where no superset of an accepting SCC can be rejecting, $\DC$ is the class of languages where no subset of an accepting SCC can be rejecting, and $\DB \cap \DC$ is the class of languages where no alternation between accepting and rejecting SCCs is allowed along any inclusion chain. Thus, the language $\sema{\M}$ is not in $\DB$ and the language $\sema{\T}$ is not in $\DB \cup \DC$. 

\paragraph*{Right congruence and the rightcon automaton}
An equivalence relation $\sim$ on $\Sigma^*$ is a \emph{right congruence} if $x\sim y$ implies $xv \sim yv$ for every $x,y,v\in\Sigma^*$. The \emph{index} of $\sim$, denoted ${|\!\sim\!|}$ is the number of equivalence classes of $\sim$.  
Given a language $L$ its \emph{canonical right congruence} $\sim_L$ is defined as follows: 
$x \sim_L y$ iff ${\forall v \in \Sigma^*}$ we have ${xv\in L} \iff {yv \in L}$. 
For a word $v\in\Sigma^*$ 
the notation $[v]$ is used for the class of $\sim$ in which $v$ resides.

A right congruence $\sim$ can be naturally associated with an automaton $\M_\sim=\la \Sigma, Q, \lambda, \delta \ra$ as follows: the set of states $Q$ consists of the equivalence classes of $\sim$. The initial state $\lambda$ is the equivalence class $[\epsilon]$. The transition function $\delta$ is defined by $\delta([u],a)=[ua]$. In the sequel, we use $\R_L$ for the automaton $M_{\sim_L}$ associated with the right congruence of a given language $L$, and call it the \emph{rightcon automaton of $L$}.

Similarly, given an automaton $\M=\la \Sigma, Q, \lambda, \delta \ra$ we can naturally associate with it a right congruence as follows: $x \sim_\M y$ iff $\M(x)=\M(y)$. 
The Myhill-Nerode Theorem states that a language $L$ is regular  iff $\sim_L$ is of finite index. Moreover, if $L$ is accepted by a \dfa\ $\A$ then $\sim_\A$ refines $\sim_L$. Finally, the index of $\sim_L$ gives the size of the minimal \dfa\ for $L$. 

For $\omega$-languages, the right congruence $\sim_L$ is defined similarly, by quantifying over $\omega$-words. That is, $x \sim_L y$ iff ${\forall w \in \Sigma^\omega}$ we have ${xw\in L} \iff {yw \in L}$. Given a deterministic automaton $\M$ we can define $\sim_\M$ exactly as for finite words.
However, for $\omega$-regular languages, right congruence alone does not suffice to obtain a ``Myhill-Nerode'' characterization. As an example consider the language $L = (a+b)^*(aba)^\omega$. We have that $\sim_{L}$ consists of just one equivalence class, 
since for any $x\in\Sigma^*$ and $w\in\Sigma^\omega$ we have that $xw \in L$ iff $w$ has $(aba)^\omega$ as a suffix.
But an acceptor recognizing $L$ obviously needs more than a single state.
Note that the other side of the story entails that there are $\omega$-automata that are minimal, although two different states recognize the same language. For instance, the \dba\ $\B$ in Figure~\ref{fig:w-aut-examples} is minimal for $\sema{\B}$ but $\sema{\B_{\epsilon}}=\sema{\B_{a}}=\sema{\B_{{ab}}}$, and the \dma\ $\M$ of  Figure~\ref{fig:w-aut-examples} is minimal for $\sema{\M}$ but $\sema{\M_{1}}=\sema{\M_{2}}$. In general the problem of minimizing \dba s and \dpa s is known to be NP-complete~\cite{Schewe10}.

\paragraph*{Grammatical Inference}
\emph{Grammatical inference} or \emph{automata learning} refers to the problem of designing algorithms for inferring an unknown language from good and bad examples, i.e. from words labeled by their membership in the language. The learning algorithm is required to return some concise representation of the language, typically an automaton. The task of a learning algorithm can thus be thought of as trying to distinguish the different necessary states of an automaton recognizing the language and establishing the transitions between them. For a regular language, by the Myhill-Nerode theorem, $\sim_L$ can be used to distinguish states. Indeed, if the algorithm learns that $u_1v \in L$ and $u_2v \notin L$ for some $u_1,u_2,v\in\Sigma^*$ then $u_1\not\sim_L u_2$ and the words $u_1$ and $u_2$ must reach two different states of the minimal DFA for $L$. Once all equivalence classes of $\sim_L$ are discovered, the automaton $\M_{\sim_L}$ can be extracted, and by setting the state corresponding to the empty word to be the initial state, and states corresponding to positive examples as accepting, the minimal DFA is obtained. Many learning algorithms, e.g. Bierman and Feldman's algorithm~\cite{BF72} for learning a regular language from a finite sample, and Angluin's \lstar~\cite{Angluin87} algorithm for learning a regular language using membership and equivalence queries build on this idea. 

The idea of trying to distinguish states using right congruence relations is in the essence of many learning algorithms for formalisms richer than regular language (c.f.~\cite{AartsV10,Bojanczyk14,BB13,DrewsD17,MalerM17,MensM15}). For instance, learning of deterministic weighted automata~\cite{BM15,BV96} is founded on Fliess's theorem~\cite{Fli74} which is a generalization of the Myhill-Nerode theorem to the weighted automata setting.

For $\omega$-regular languages, learning algorithms encounter the problem that the right congruence is not informative enough. Maler and Pnueli~\cite{MalerPnueli95} give a polynomial time algorithm for learning the class $\DB\cap \DC$ using membership and equivalence queries. Their algorithm also works by trying to distinguish all equivalence classes of $\sim_L$ for the unknown language $L$, and relies on the fact that $\sim_L$ is informative enough for the class $\DB\cap \DC$. The problem of learning the full class of regular $\omega$-languages via membership and equivalence queries was considered open for many years~\cite{Leucker06full}. It was then suggested by Farzan et al.~\cite{FarzanCCTW08} to use the reduction to finite words~\cite{CalbrixNP93}, building on the fact that two $\omega$-languages are equivalent iff they agree on the set of ultimately periodic words. We followed a different route~\cite{AngluinF14,AngluinF16}, and devised an algorithm for learning the full class of $\omega$-regular languages using families of DFAs as acceptors~\cite{AngluinBF16,AngluinBF18}. Families of DFAs build on the notion of families of right congruences~\cite{MalerStaiger97}, a set of right congruence relations for a given regular $\omega$-language $L$, which are enough to fully characterize $L$.
Both solutions, however, may encounter big automata in intermediate stages. The solution in~\cite{FarzanCCTW08} may involve DFAs of size $2^n+2^{2n^2+n}$ where $n$ is the number of states in a non-deterministic \buchi\ automaton for the language~\cite{CalbrixNP93}, and the solution in~\cite{AngluinF14} may involve families of DFAs of size $m2^m$ where $m$ is the number of states in a minimal deterministic parity automaton for $L$~\cite{Fisman18}.
We do not go into further details here since they are not used in the current work, which focuses on languages for which the right congruence is informative enough (in hopes of providing a basis for more efficient learning algorithms for these restricted classes). The  reader interested in these details is referred to~\cite{Fisman18}. 

\paragraph*{Refinement of the right congruence}
\label{sec:relations-to-the-right-aut}

We show that as is the case in finitary regular languages, every deterministic $\omega$-automaton $\D$ refines the rightcon automaton of the respective language $\sema{\D}$, and the automaton for the powerset construction of a given non-deterministic $\omega$-automaton $\N$ refines the right congruence of the respective language $\sema{\N}$.

\begin{proposition}\label{prop:det-aut-refine-synt}
	Let $\D$ be a deterministic $\omega$-automaton. Then $\sim_{\D}$ refines $\sim_{\sema{\D}}$.
\end{proposition}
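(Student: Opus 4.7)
The plan is to unfold both definitions and observe that every acceptance condition the paper considers (Büchi, co-Büchi, parity, Muller, transition-Muller) depends only on the \emph{tail} of a run, i.e.\ on the set of states or transitions visited infinitely often. So a finite prefix — and in particular which of two equivalent finite words is prepended — cannot affect acceptance.

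First I would fix $x,y\in\Sigma^*$ with $x\simd{}{\D} y$, i.e.\ $\D(x)=\D(y)=:q$. I would then take an arbitrary $w\in\Sigma^\omega$ and compare the runs $\rho_x$ of $\D$ on $xw$ and $\rho_y$ of $\D$ on $yw$. By determinism both runs agree with the run of $\D_q$ on $w$ from position $|x|$ (respectively $|y|$) onward; that is, $\rho_x$ and $\rho_y$ have a common infinite suffix, namely the unique run of $\D_q$ on $w$. The finite prefixes $\rho_x[0..|x|]$ and $\rho_y[0..|y|]$ may of course differ, but they are finite.

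Next I would carry out a uniform case analysis on the acceptance condition, showing in one line per case that acceptance of $\rho_x$ and acceptance of $\rho_y$ coincide: for Büchi and co-Büchi the set of states visited infinitely often is the same; for parity the minimum color visited infinitely often is the same; for Muller the infinity-set of states is the same; and for transition-Muller the set of transitions taken infinitely often is the same (a finite prefix contributes only finitely many transitions and therefore nothing to the limit). Consequently $xw\in\sema{\D}\iff yw\in\sema{\D}$, and since $w$ was arbitrary we conclude $x\simd{}{\sema{\D}}y$. Thus $\simd{}{\D}$ refines $\simd{}{\sema{\D}}$.

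I do not expect a serious obstacle: the only point that needs a brief word is that the five acceptance conditions listed in the preliminaries are all tail-determined, which is immediate from their definitions. Determinism is used in a single place — to guarantee that the suffix of the run from $q$ on $w$ is unique, so that $\rho_x$ and $\rho_y$ genuinely share an infinite tail rather than merely having one possible shared tail among many.
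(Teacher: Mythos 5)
Your proof is correct and follows essentially the same route as the paper's: determinism forces the runs on $xw$ and $yw$ to share an infinite tail from the common state $q$ onward, so the infinity-set of states (or transitions) coincides and every acceptance condition under consideration gives the same verdict. The only difference is cosmetic --- you spell out the per-condition case analysis that the paper compresses into a single parenthetical remark.
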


	Let $\N$ be a non-deterministic automaton. We use $\P_\N$ to denote the deterministic automaton obtained from $\N$ by applying the powerset construction to $\N$. That is, if $\N=(\Sigma,Q,q_0,\delta,\alpha)$ then $\P_\N=(\Sigma,2^Q,\{q_0\},\delta')$ where $\delta'(S,a)=\cup_{q\in S}\delta(q,a)$ for any $S\subseteq Q$ and $a\in \Sigma$. 

\begin{proposition}\label{prop:nondet-aut-refine-synt}
	Let $\N$ be a non-deterministic $\omega$-automaton. Then $\sim_{\P_\N}$ refines $\sim_{\sema{\N}}$.
\end{proposition}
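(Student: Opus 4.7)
The plan is to show that whenever two finite words $x, y \in \Sigma^*$ reach the same state in $\P_\N$, i.e.\ $\P_\N(x) = \P_\N(y)$, they satisfy $xw \in \sema{\N} \iff yw \in \sema{\N}$ for every $\omega$-word $w$. This is exactly the statement that $\sim_{\P_\N}$ refines $\sim_{\sema{\N}}$, by definition of the two right congruences.

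The key structural observation I would establish first is a decomposition lemma for acceptance in the nondeterministic automaton: for any finite prefix $u \in \Sigma^*$ and any $\omega$-word $w$,
\[
uw \in \sema{\N} \iff \exists\, q \in \P_\N(u) \text{ such that } w \in \sema{\N_q}.
\]
The forward direction follows because any accepting run of $\N$ on $uw$ splits into a finite prefix run of length $|u|$ ending at some reachable state $q$ (hence $q \in \P_\N(u)$ by the powerset construction) and an accepting suffix run of $\N_q$ on $w$. The backward direction concatenates a finite witnessing run on $u$ ending in $q$ with an accepting run of $\N_q$ on $w$ to produce an accepting run of $\N$ on $uw$; since acceptance conditions for the $\omega$-modes we consider (B\"uchi, co-B\"uchi, parity, Muller, transition-Muller) depend only on the infinite tail of the run, the prepended finite prefix does not disturb acceptance.

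Using this decomposition, the proof is then immediate: if $\P_\N(x) = \P_\N(y)$, then for every $w \in \Sigma^\omega$,
\[
xw \in \sema{\N} \iff \exists\, q \in \P_\N(x): w \in \sema{\N_q} \iff \exists\, q \in \P_\N(y): w \in \sema{\N_q} \iff yw \in \sema{\N},
\]
so $x \sim_{\sema{\N}} y$. Hence $\sim_{\P_\N} \,\subseteq\, \sim_{\sema{\N}}$, which is the desired refinement.

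The only subtle step is the decomposition lemma, and there the subtlety is purely notational: one must be careful that the ``tail acceptance'' property used for gluing runs together applies to every acceptance condition under consideration. For B\"uchi, co-B\"uchi, parity, and (state- or transition-) Muller conditions this is clear because acceptance depends only on states or transitions visited infinitely often, which are entirely determined by any suffix of the run. No analogous property would be needed for finite-word acceptors, but here it is what makes the argument go through uniformly across all $\omega$-acceptance conditions defined in Section~\ref{sec:prelim}.
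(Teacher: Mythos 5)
Your proof is correct and follows essentially the same route as the paper's: both reduce the claim to the observation that $uw \in \sema{\N}$ iff some state $q$ reached by $\N$ on $u$ (i.e.\ some $q \in \P_\N(u)$) admits an accepting run on $w$, which is symmetric in $u$ and $v$ once $\P_\N(u) = \P_\N(v)$. Your version is simply more explicit about why prepending a finite prefix preserves acceptance under each of the $\omega$-acceptance conditions, a point the paper leaves implicit.
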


\section{$\omega$-Languages with a trivial right congruence}
\label{sec:trivial}

If $L$ is a regular $\omega$-language such that ${|\sim_L|}=1$, we say that the rightcon automaton of $L$ is \emph{trivial}.
In this case, the rightcon automaton conveys almost no information about $L$.
It was shown in~\cite{Staiger83} that there are $2^{2^{\aleph_0}}$ $\omega$-languages for which the rightcon is trivial.
In Proposition~\ref{prop:characterization-for-trivial} we characterize those regular $\omega$-languages that have a trivial rightcon automaton.

If $w_1$ and $w_2$ are $\omega$-words, then $w_1$ is a \emph{finite variant} of $w_2$, denoted $w_1 =_\infty w_2$,  if there exist finite words $x_1$ and
$x_2$ and an $\omega$-word $w$ such that $w_1=x_1w$ and $w_2 = x_2w$.
The following shows that languages with a trivial rightcon automaton ignore differences between finite variants.
\begin{proposition}\label{prop:finite-variation}
	Let $L$ be a regular $\omega$-language such that ${|\sim_L|}=1$. Let $w_1,w_2\in\Sigma^\omega$.
	If $w_1 =_\infty w_2$ then $w_1 \in L$ iff $w_2 \in L$.
\end{proposition}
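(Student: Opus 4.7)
The plan is to unfold the two definitions in play and observe that they match up immediately. Specifically, assume $|\sim_L|=1$, so every pair of finite words lies in the single equivalence class, and in particular $x_1 \sim_L x_2$ for the finite prefixes $x_1, x_2$ witnessing $w_1 =_\infty w_2$ (that is, $w_1 = x_1 w$ and $w_2 = x_2 w$ for some common $\omega$-word $w \in \Sigma^\omega$).

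Next I would apply the definition of $\sim_L$ for $\omega$-languages, which says $x_1 \sim_L x_2$ iff for every $w' \in \Sigma^\omega$ we have $x_1 w' \in L \iff x_2 w' \in L$. Instantiating this universal statement at $w' = w$ yields $x_1 w \in L \iff x_2 w \in L$, i.e.\ $w_1 \in L \iff w_2 \in L$, which is exactly the claim.

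There is essentially no obstacle here: the proposition is a direct consequence of the two definitions, and no structural property of the automaton, the Wagner hierarchy, or the acceptance condition is needed. The only thing worth checking is that the definition of $=_\infty$ used (finite prefixes followed by a \emph{common} $\omega$-suffix) does indeed place the discrepancy between $w_1$ and $w_2$ entirely inside finite words, so that the triviality of $\sim_L$ on $\Sigma^*$ suffices. So the write-up will be just two lines, with a brief mention that triviality is what collapses the quantifier over equivalence classes into a single universally valid implication.
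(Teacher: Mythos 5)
Your proposal is correct and follows exactly the paper's own argument: decompose $w_1 = x_1 w$ and $w_2 = x_2 w$, note that triviality of $\sim_L$ forces $x_1 \sim_L x_2$, and instantiate the definition of $\sim_L$ at the common suffix $w$. Nothing is missing.
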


\begin{proof}
	Because $w_1=_\infty w_2$, there exist $x_1$, $x_2$ and $w$ such that $w_1 = x_1w$ and $w_2 = x_2w$.
	Because ${|\sim_L|}=1$, $[x_1]_{\sim_L} = [x_2]_{\sim_L}$. Thus $x_1w\in L$ iff $x_2w\in L$,
	so $w_1 \in L$ iff $w_2 \in L$.
\end{proof}

Clearly if $L=\Sigma^*v^\omega$ for some $v\in\Sigma^*$ then its rightcon automaton is trivial. Does this hold in general 
when $L=\Sigma^*L'$ for some $\omega$-regular language $L'$? The following example shows that the answer is negative.
Let  $\Sigma=\{a,b\}$ and $L_1=\Sigma^*L'$ for $L'=(ab^\omega + ba^\omega)$ then $\sim_{L_1}$ has four equivalence classes. 

However, if $L=\Sigma^*R^\omega$ for some regular language $R$ we can show that then $|\sim_L|=1$. Is this also a necessary condition for having a trivial right congruence? 
The answer again is negative. For example, the language $L_2 = (a+b)^*(a^{\omega} + b^{\omega}$) has
${|\sim_{L_2}|} = 1$, but it is not of the form $(a+b)^*R^{\omega}$ for
any regular set $R$.

The following proposition provides a full characterization of regular $\omega$-language with a trivial rightcon automaton.

\begin{proposition}\label{prop:characterization-for-trivial}
	A regular $\omega$-language has a trivial rightcon automaton iff $L=\Sigma^* (R_1^\omega + R_2^\omega + \ldots + R_k^\omega)$ for some regular languages $R_1,\ldots,R_k$.
\end{proposition}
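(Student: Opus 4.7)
The plan is to use Proposition~\ref{prop:finite-variation} (closure under finite variants) as the main hinge for both directions.

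For the $(\Leftarrow)$ direction, the key observation is that for any regular $R$, the set $\Sigma^* R^\omega$ is closed under finite variants. Indeed, if $w \in \Sigma^* R^\omega$, then $w$ has an $\omega$-suffix $v_1 v_2 v_3 \ldots$ with each $v_j \in R$; any $w' =_\infty w$ shares an $\omega$-suffix $v$ with $w$ which either begins at a word boundary of this factorization, giving $w' \in \Sigma^* R^\omega$ directly, or starts partway through some $v_j$, in which case $w'$ still has $v_{j+1} v_{j+2} \ldots \in R^\omega$ as an $\omega$-suffix. Finite unions preserve closure under finite variants, so $L = \Sigma^*(R_1^\omega + \ldots + R_k^\omega)$ is closed under finite variants. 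Hence for all $x,y \in \Sigma^*$ and $w \in \Sigma^\omega$, $xw \in L \iff yw \in L$, so $|\!\sim_L\!| = 1$.

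For the $(\Rightarrow)$ direction, take a deterministic Muller automaton $\M = \la \Sigma, Q, q_0, \delta, \{F_1, \ldots, F_k\}\ra$ for $L$. I first argue one may assume each $F_i$ is an SCC reachable from $q_0$: any $F_i$ that cannot arise as the set of states visited infinitely often by some run contributes nothing to $\sema{\M}$ and may be discarded. For each $i$, fix a state $q_i \in F_i$ and define
\[
R_i = \{v \in \Sigma^+ : \delta(q_i, v) = q_i \text{ and the run from } q_i \text{ on } v \text{ visits only, and every, state of } F_i\}.
\]
Each $R_i$ is regular, accepted by a DFA that tracks which states of $F_i$ have been seen. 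The claim is $L = \Sigma^*(R_1^\omega + \ldots + R_k^\omega)$.

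For the inclusion $L \subseteq \Sigma^*(R_1^\omega + \ldots + R_k^\omega)$, given $w \in L$ whose run has the set $F_i$ visited infinitely often, I choose a time $T'$ after which the run never leaves $F_i$ and at which the run is at $q_i$, and write $w = uv$ accordingly; since $q_i$ and every state of $F_i$ are visited infinitely often in the tail, successive returns to $q_i$ can be chosen as cut points so that each intermediate chunk visits all of $F_i$, yielding $v \in R_i^\omega$. For the reverse inclusion, given $w = uv$ with $v \in R_i^\omega$, I pick $u_i$ with $\M(u_i) = q_i$; then the run of $\M$ on $u_i v$ reaches $q_i$ after $u_i$ and follows the $R_i$-factorization, so the set of states visited infinitely often is exactly $F_i$, whence $u_i v \in L$. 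Since $w =_\infty u_i v$, closure under finite variants forces $w \in L$. The main obstacle is the careful decomposition argument in the $L \subseteq$ direction: one must verify that cut points at returns to $q_i$ can be chosen so every intermediate chunk witnesses all of $F_i$, which follows because each state of $F_i$ is visited infinitely often in the tail and so must occur between sufficiently spaced returns to $q_i$. The WLOG assumption on $F_i$ is a standard Muller automaton cleanup.
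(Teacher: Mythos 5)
Your proof is correct and takes essentially the same route as the paper's: the same construction of $R_i$ as the loops at a fixed state of each accepting SCC that visit exactly that SCC, the same appeal to finite-variant closure (Proposition~\ref{prop:finite-variation}) to get $\Sigma^*(R_1^\omega+\cdots+R_k^\omega)\subseteq L$, and the same re-factorization of $xw\in\Sigma^*R_i^\omega$ past the prefix $x$ for the converse direction. Your handling of the cut points at returns to $q_i$ is simply a more explicit version of a step the paper leaves terse.
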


\begin{proof}
	Let $\aut{M} =(\Sigma,Q,\lambda,\delta,\alpha)$ be a DMA for $L$ with no unreachable states. 
	Let $\alpha=\{S_1,S_2,\ldots,S_k\}$. We can assume wlog that all $S_i$'s are strongly connected (because an $S_i$ that is not an SCC can be omitted). For $i\in [1..k]$ let $s_i$ be some state in $S_i$ and let $R_i$ be the regular set of finite words that traverse $\aut{M}$ starting at $s_i$, ending in $s_i$, and visiting all states in $S_i$ and no other states. 
	
	Assume that ${|\sim_L|}=1$.
	Let $L' = \Sigma^*(R_1^\omega+R_2^\omega+\ldots+R_k^\omega)$. 
	We claim that $L = L'$. 
	Let $w \in L$. Then $w = xw'$, where for some $i$, $x \in \Sigma^*$ reaches $s_i \in S_i$ and $w' \in R_i^{\omega}$, so $w \in L'$.
	Conversely, if $w \in L'$ then $w = xw'$, where $x \in \Sigma^*$ and for some $i$, $w' \in R_i^{\omega}$.
	Because $M$ contains no unreachable states, there exists $y \in \Sigma^*$ such that $y$ reaches state $s_i \in S_i$.
	Then $yw' \in L$, so by Proposition~\ref{prop:finite-variation}, $xw' \in L$.
	
	For the converse, suppose that $R_1, \ldots, R_k$ are regular languages and $L = \Sigma^*(R_1^{\omega} + \ldots + R_k^{\omega})$.
	If ${|\sim_L|}>1$ then there exist $x, y \in \Sigma^*$ such that $[x]_L \neq [y]_L$, so wlog assume $w \in [x]_L$ and $w \not\in [y]_L$.
	Because $xw \in L$, there exists an $i$ such that $xw \in \Sigma^*R_i^{\omega}$.
	Thus for some $x' \in \Sigma^*$ and elements $w_1, w_2, \ldots$ of $R_i$, $xw = x'w_1w_2 \cdots$.
	Hence there exists $x'' \in \Sigma^*$ and $w'' \in R_i^{\omega}$ such that $xw = xx''w''$.
	But then $yw = yx''w''$, and $yx''w'' \in \Sigma^*R_i^{\omega}$,
	which implies that $yw \in L$, a contradiction.
	Thus we must have ${|\sim_L|}=1$.
\end{proof}

\commentout{
	\subsubsection{Stabilizing words, states and automata}
	\tbd{The notion of stabilization can be connected to a word (taking all states to a unique state), or to a state (that has some word taking all states to it)  or to a state and an SCC (a state is stabilizing in an SCC if there is a word taking all the states in that SCC to that state).   Then an automaton is stabilizing as defined below if all of its states are stabilizing wrt to all MSCCs of the automaton. Seems that to get in trouble with stabilizing words, it suffices to have a state that is stabilizing wrt to a certain SCC.}
	
	A \dma\ $\M$ is said to be \emph{stabilizing} if for every word $v$ that loops in an accepting SCC $S$ from some state $q$ to itself while visiting all of $S$'s states and no other states, it holds that $\M(uv)=q$ for all words $u\in\Sigma^*$. For instance, the \dma\ $\M_4$ in Fig.~\ref{fig:union-non-closure} is stabilizing. The DMA $\M_3$ in Fig.~\ref{fig:union-non-closure} is not stabilizing since  $c^\omega$ can loop in both SCCs $\{1\}$ and $\{2\}$.

	We claim that if $\M$ is {stabilizing} then it has a trivial rightcon automaton.

	\begin{proposition}\label{prop:unique-state-per-looping-word}
		Let $\M$ be a stabilizing \dma, then $|\sim_{\sema{\M}}|=1$. 
	\end{proposition}	
	
	\begin{proof}
		Assume $\M$ is stabilizing but its rightcon automaton is not trivial. Then there exists an $\omega$-word $w$ and words $u_1,u_2$ such that $u_1w\in \sema{\M}$ but $u_2w\notin \sema{\M}$.  Let $w=xy^\omega$ where $y$ is a period of $w$ on which $\M$ loops from some state $q$ in some accepting SCC $S$ while reading $u_1w$. Since $\M$ is stabilizing we have that $\M(u_2xy)=q$. Thus $\M$ on reading $u_2xy^\omega$ loops in $S$ as well, in contradiction to $u_2w$ being rejected.
	\end{proof}
	
	The converse is not true. There exists languages with a trivial rightcon automaton that are not stabilizing. For instance the language accepted by $\M_3$ in Fig.~\ref{fig:union-non-closure} has a single equivalence class but as mentioned above it is not stabilizing.

	A \dma\ $\M$ is said to be \emph{co-stabilizing} if its complement is stabilizing. While  $\M_4$ in Fig.~\ref{fig:union-non-closure} is stabilizing it is not co-stabilizing, since $c^\omega$ can loop in both SCCs $\{0\}$ and $\{2\}$.
	
	There are \dma s with $|\sim_L|=1$ that are neither stabilizing nor co-stabilizing. For instance, the DBA $\aut{B}''_h$ in Fig.~\ref{fig:dba-conflict-one-stab-word}.
}

\section{$\omega$-Languages with an informative right congruence}
\label{sec:isomorphic}

We turn to examine the cases where the right congruence is as informative as it can be; that is the \rightaut\ automaton is isomorphic to an $\omega$-automaton recognizing the respective language.
We use $\IB$ (resp.  $\IC$, $\IP$, $\IM$, $\IT$) to denote the class of languages for which the rightcon automaton $\R_L$ is isomorphic to a \dba\ (resp.  \dca, \dpa, \dma, \dta) accepting the language $L$. 

\paragraph*{A small experiment}
We were curious to see what are the odds that a randomly generated Muller automaton will be isomorphic to its rightcon automaton, i.e. fully informative.
We ran a small experiment in which we generated a random Muller automaton over an alphabet of cardinality 3, with 2 accepting strongly connected sets, and tested whether it turned out to be isomorphic to its rightcon automaton.  
The procedure was to try to distinguish states of the random \dma\ using 100,000 random ultimately periodic $\omega$-words. If all states were successfully distinguished then the \dma\ is certainly isomorphic to its rightcon automaton, and was declared as such. If we failed to distinguish at least 2 states, we declared the \dma\ as non-isomorphic, though it might be that more tests would distinguish the undistinguished states and the \dma\ may in fact be isomorphic. So the probability of a randomly generated \dma\ being isomorphic to its rightcon automaton may be higher than what is suggested by our results.

We generated \dma s with 5, 6, 7, 8, 9, and 10 states; 100 of each size. The results are summarized in the following table. We find it interesting that in most of the cases a randomly generated \dma\ turns out to be isomorphic to its rightcon automaton, suggesting that this property is not rare. We defer a more careful study of the
extent to which random automata have informative
right congruences for further research. 

\begin{center}
	\begin{tabular}{|r||c|c|c|c|c|c|}
		\hline 
		Number of states & 5 & 6 & 7 & 8 & 9 & 10   \\ \hline \hline
		Isomorphic & 85 & 93 & 88 & 96 & 96 & 94 \\ 
		Not Isomorphic & 15 & 7 & 12 & 4 & 4 & 6\\ \hline
	\end{tabular} \\ 
\end{center}

\begin{figure}[t]
	\begin{center}
		\scalebox{0.60}{
			\begin{tabular}{c@{\qquad\qquad\qquad}c@{\qquad\qquad\qquad}c}
				\begin{tikzpicture}[->,>=stealth',shorten >=1pt,auto,node distance=2.2cm,semithick,initial text=,initial where=left]
				
				\node[label]          (L)      [fontscale=1.5]          {$\aut{M}:$};
				\node[label]          (F)      [right of=L, node distance=1.4cm, fontscale=1.5]          {$\{\{\lambda,1\}\}$};
				\node[label]          (P)      [below of=L, node distance=.8cm, fontscale=1.5]          {$\aut{P}:$};
				\node[label]          (X1)      [right of=P, node distance=1.4cm, fontscale=1.5]          {$\kappa(0)=0$};
				\node[label]          (X2)      [below of=X1, node distance=.6cm, fontscale=1.5]          {$\kappa(\lambda)=1$};
				\node[label]          (X3)      [below of=X2, node distance=.6cm, fontscale=1.5]          {$\kappa(1)=2$};				
				
				\node[initial,state]  (Lambda)   [below of=L, node distance=3cm] {$\lambda$};
				\node[state]          (Zero)    [below left  of=Lambda]{$0$};
				\node[state]          (One)   [below right of=Lambda]{$1$};

				\path (Lambda) edge [bend right]  node [left] {$0$} (Zero); 
				\path (Lambda) edge [bend left]  node {$1$} (One); 
				\path (Zero) edge [bend right] node [right] {$0,1$} (Lambda); 
				\path (One) edge [loop right] node {$0$} (One); 
				\path (One) edge  [bend left]          node [right]{$1$} (Lambda);

				\node[label]          (A)   [right of=L, node distance=6cm, fontscale=1.5]              {$\B$ or $\C:$};
				
				\node[state]   (Le)  [below of=A, node distance=2.5cm]                          {$1$};
				\node[state]   (La)  [right of=Le]       {$2$};
				\node[state]   (Lb)  [below of=La]       {$3$};
				\node[state,accepting,initial]   (Lc)  [below of=Le]       {$0$};
				
				\path (Le) edge   node {$a$} (La);
				\path (La) edge   node {$b$} (Lb);
				\path (Lb) edge   node {$c$} (Lc);
				\path (Lc) edge   node {$a,b,c$} (Le);
				
				\path (Le) edge [loop left] node {$b,c$} (Le);
				\path (La) edge [loop right] node {$a$} (La);
				\path (Lb) edge [loop right] node {$b$} (Lb);
				
				\path (La) edge [bend left, gray] node {$c$} (Le);
				\path (Lb) edge [gray] node {$a$} (Le);

				\node[label] (xII) [right of=L, node distance=12cm, fontscale=1.5] {$\M': \{ \{1\}, \{2\}\}$};
				
				\node[initial,state]          (xQ0)   [below of=xII, node distance=2.6cm] {${1}$};
				\node[state] (xQ1)    [right  of=xQ0]{${2}$};
				\node[state] (xQ2)    [right  of=xQ1]{${3}$};
								
				\path (xQ0) edge [loop above] node {$a$} (xQ0); 
				\path (xQ0) edge           node {$b$} (xQ1); 
				\path (xQ0) edge  [bend right=50]         node {$c$} (xQ2); 
				
				\path (xQ1) edge [loop above] node {$b$} (xQ1); 
				\path (xQ1) edge  [bend left]          node {$c$} (xQ0); 
				\path (xQ1) edge           node {$a$} (xQ2); 
				\path (xQ2) edge  [loop above]         node {$a,b,c$} (xQ2);

				\node[label] (zII) [ right of=xII, node distance=9cm, fontscale=1.5] {$\T: \{ \{(1,a,1)\}, \{(1,b,1)\}\}$};
				
				\node[initial,state]          (zQ0)   [right of=xQ2, node distance=4.5cm] {${1}$};
								
				\path (zQ0) edge [loop above] node {$a$} (zQ0); 
				\path (zQ0) edge [loop below] node {$b$} (zQ0); 
				
				\end{tikzpicture}
				
		\end{tabular}}
		\caption{A \dma\ $\M$ and an equivalent \dpa\ $\P$ for a language $L_{PM}$ such that $L_{PM}\in\IM\setminus\DB\cap\DC$ and $L_{PM}\in\IP\setminus\DB\cap\DC$. Second from the left, when regarded as a \dba\ $\B$ we have $\sema{\B}\in\IB\setminus\DC$. When regarded as a \dca\ $\C$ we 
			have $\sema{\C}\in\IC\setminus\DB$. A \dma\ $\M'$ such that $\sema{\M'}\in\IM\setminus \IP$, and a \dta\ $\T$ such that $\sema{\T}\in\RT\setminus\RM$. }
		\label{fig:inDBWandIBnotinCBW}
		\label{fig:twisted-michel}
		\label{fig:inIPnotinIM}
	\end{center}
\end{figure}

\paragraph*{Expressiveness results}
As mentioned earlier, all weak regular $\omega$-languages, i.e. all languages that are in $\DB\cap\DC$ are isomorphic to their right congruence. 
We turn to the question of whether there exist languages outside this class that are isomorphic to their right congruence. 

Staiger~\cite{Staiger83} has shown that $\DB\cap\DC \subseteq \IM$. It is easy to see that this entails $\DB\cap\DC \subseteq \IP$. We show that both inclusions are strict.

\begin{proposition}
	\label{prop:DBcapDCsubsetneqIM}
	$\DB\cap\DC\subsetneq\IM$ and $\DB\cap\DC\subsetneq\IP$
\end{proposition}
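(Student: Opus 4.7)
The plan is to establish the two inclusions separately and then defeat them both with a single witness.

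For the inclusion $\DB\cap\DC \subseteq \IM$, I would simply invoke the cited Staiger result. For $\DB\cap\DC \subseteq \IP$, the observation is that if a weak language $L$ admits a \dba\ isomorphic to $\R_L$, then reading that same automaton as a \dpa\ (assigning color $1$ to accepting states and color $2$ to the rest) preserves the transition structure, so the same isomorphism witnesses $L \in \IP$.

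To prove strictness, I would exhibit the language $L_{PM} = \sema{\M} = \sema{\P}$ depicted in Figure~\ref{fig:inIPnotinIM}, and show (i)~$L_{PM} \in \IM \cap \IP$ and (ii)~$L_{PM} \notin \DB \cup \DC$, which is stronger than needed. For (i) I need to verify that the three states $\lambda, 0, 1$ of $\M$ (equivalently $\P$) are pairwise separated by $\sim_{L_{PM}}$; since Proposition~\ref{prop:det-aut-refine-synt} already gives $\sim_\M$ refines $\sim_{L_{PM}}$, this is enough to conclude that $\R_{L_{PM}}$ has exactly three classes and is isomorphic to $\M$ (and to $\P$). The key observation is that any $\omega$-word read from $\lambda$ and from $1$ can be tracked in the product automaton, and the pair $(\lambda,1)$ admits non-synchronizing runs such as $(1100)^\omega$: starting from $\lambda$ this word forces a run visiting $\{\lambda,0,1\}$ infinitely often (hence rejecting in $\M$ and giving minimum color $0$ in $\P$, rejecting), while starting from $1$ it yields a run visiting only $\{\lambda,1\}$ infinitely often (accepted by both conditions). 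A short prefix extends this distinguisher to separate $\lambda$ from $0$ and $1$ from $0$ as well.

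For (ii) I would apply the SCC characterization of $\DB$ and $\DC$ recalled in the preliminaries (the Wagner hierarchy). In $\M$, the sets $\{1\} \subsetneq \{\lambda,1\} \subsetneq \{\lambda,0,1\}$ are all SCCs of the transition graph, with $\{\lambda,1\}$ accepting and its sub-SCC $\{1\}$ as well as its super-SCC $\{\lambda,0,1\}$ rejecting. The subset alternation rules out membership in $\DC$, and the superset alternation rules out membership in $\DB$, giving $L_{PM}\notin\DB\cap\DC$ and completing both strictness claims.

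The main obstacle is step (i): exhibiting a concrete $\omega$-word that is accepted from one of $\{\lambda,0,1\}$ and rejected from another. The tool that makes this routine is the product construction above, which reveals that $(1100)^\omega$ is a non-synchronizing word for $(\lambda,1)$; once this single distinguisher is in hand, the other pairs follow by prefixing a short word that drives the product into $(\lambda,1)$.
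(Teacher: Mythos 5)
Your proposal is correct and follows essentially the same route as the paper: the inclusions via Staiger's theorem plus the structure-preserving B\"uchi/co-B\"uchi-to-parity conversion, and strictness via the very same witness $L_{PM}$, with concrete distinguishing $\omega$-words (yours, $(1100)^\omega$ and its prefixed variants, check out just as the paper's $(0011)^\omega$ and $(0110)^\omega$ do) together with the two-alternation SCC chain $\{1\}\subseteq\{1,\lambda\}\subseteq\{1,\lambda,0\}$ excluding $\DB\cup\DC$. The only cosmetic difference is the choice of distinguishers and your explicit appeal to the product automaton to find them.
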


\begin{proof}
	In~\cite[Thm. 24]{Staiger83} Staiger showed that $\DB\cap\DC \subseteq \IM$.  
	From this, since any \dba\ can be recognized by an isomorphic \dpa\ (by setting the accepting states color $1$ and the non-accepting states color $2$) and a \dca\ can be recognized by an isomorphic \dpa\ (by setting the $F$ states color $0$ and the non-$F$ states color $1$) it follows that $\DB\cap\DC \subseteq \IP$.
	
	To show that the inclusion is strict we show that the language $L_{PM}$ recognized by the automaton in Fig.~\ref{fig:twisted-michel} on the left, either when regarded as a \dma\ $\M$ or as a \dpa\ $\P$  is in $\IM \cap \IP$ but not in $\DB\cap\DC$. 
	One can verify that $\sim_{\Lang{\M}}$ is isomorphic to $\M$. (Note that $(0011)^{\omega}$ and $(0110)^{\omega}$ are sufficient
	to distinguish $\{\lambda, 0, 1\}$.)
	As mentioned in the preliminaries, this language is not in $\DB \cap \DC$ 
	since it has alternation between accepting and rejecting SCCs along
	the inclusion chain 
	$\{1\}\subseteq \{1,\lambda\} \subseteq \{1,\lambda,0\}$.
\end{proof}

It is easy to see that both $\IM$ and $\IP$ subsume both $\IB$ and $\IC$. The same example used in the proof of Proposition~\ref{prop:DBcapDCsubsetneqIM} can be used to show that the inclusion is strict.

\begin{proposition}
	\label{prop:IBandICsubsetneqIM}
	$\IB\cup\IC \subsetneq \IM$ and $\IB\cup\IC \subsetneq \IP$
\end{proposition}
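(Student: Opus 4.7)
The plan is to establish the two inclusions by routine conversion of acceptance conditions, and to use the language $L_{PM}$ from the proof of Proposition~\ref{prop:DBcapDCsubsetneqIM} to witness both strictness claims simultaneously.

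For the inclusions $\IB\cup\IC\subseteq\IM$ and $\IB\cup\IC\subseteq\IP$, I would observe that the DBA-to-DMA, DCA-to-DMA, DBA-to-DPA, and DCA-to-DPA translations can all be performed while leaving the state set and transitions intact. Explicitly, given a DBA $\A=\la\Sigma,Q,\lambda,\delta,F\ra$ isomorphic to $\R_{\sema{\A}}$, I re-interpret it as a DMA by setting $\alpha=\{S\subseteq Q : S\cap F\neq\emptyset\}$, or as a DPA by coloring $F$ with $1$ and $Q\setminus F$ with $2$; the symmetric choices ($\alpha=\{S : S\cap F=\emptyset\}$, or colors $0$ on $F$ and $1$ on $Q\setminus F$) handle the DCA case. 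Because the underlying state graph is preserved, so is isomorphism with the rightcon automaton, and the accepted language is unchanged.

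For the strict part, I would reuse $L_{PM}$, recognized by the automaton $\M$ (equivalently $\P$) in Figure~\ref{fig:twisted-michel}, which the proof of Proposition~\ref{prop:DBcapDCsubsetneqIM} already places in $\IM\cap\IP$. Since $\IB\subseteq\DB$ and $\IC\subseteq\DC$, it suffices to prove $L_{PM}\notin\DB$ and $L_{PM}\notin\DC$. I would invoke Wagner's language-invariant characterization, recalled in Section~\ref{sec:prelim}, which reduces each exclusion to exhibiting, in some Muller acceptor for $L_{PM}$, an inclusion chain of SCCs with the forbidden accepting/rejecting alternation. The chain $\{1\}\subseteq\{1,\lambda\}\subseteq\{1,\lambda,0\}$ in $\M$ serves as a single witness: each of the three sets is strongly connected in $\M$, and under $\alpha=\{\{\lambda,1\}\}$ their statuses along the chain read rejecting, accepting, rejecting. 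The accepting-inside-rejecting step $\{1,\lambda\}\subseteq\{1,\lambda,0\}$ yields $L_{PM}\notin\DB$, and the rejecting-inside-accepting step $\{1\}\subseteq\{1,\lambda\}$ yields $L_{PM}\notin\DC$. The one point requiring care is the appeal to language-level (rather than merely automaton-level) invariance via Wagner's theorem --- i.e., that the chain exhibited in $\M$ rules out the existence of \emph{any} DBA or DCA for $L_{PM}$, not just this particular Muller presentation. The same chain-of-SCCs analysis applies verbatim to $\P$, so a single argument yields both $\IB\cup\IC\subsetneq\IM$ and $\IB\cup\IC\subsetneq\IP$.
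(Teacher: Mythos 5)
Your proposal is correct and follows essentially the same route as the paper: the inclusions via reinterpreting a \dba\ or \dca\ as a \dma\ or \dpa\ on the same state graph, and strictness via the language $L_{PM}$ of Proposition~\ref{prop:DBcapDCsubsetneqIM}, excluded from $\IB$ and $\IC$ by exhibiting the alternating SCC chain $\{1\}\subseteq\{1,\lambda\}\subseteq\{1,\lambda,0\}$ and appealing to Wagner's language-level invariance. Your explicit reduction to $L_{PM}\notin\DB$ and $L_{PM}\notin\DC$ via $\IB\subseteq\DB$ and $\IC\subseteq\DC$ is exactly what the paper's argument implicitly does.
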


It is thus interesting to see whether there are any languages in $\IB$ (or $\IC$) that are not already in $\DB\cap \DC$.
The answer is affirmative.
\begin{proposition}
	$\DB\cap\DC\subsetneq\IB$ and $\DB\cap\DC\subsetneq\IC$
\end{proposition}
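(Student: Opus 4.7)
The plan is to exhibit a single 4-state automaton---the middle panel of Figure~\ref{fig:twisted-michel}---and interpret it once as a DBA $\B$ and once as a DCA $\C$, each with $F=\{0\}$, yielding witness languages for $\IB\setminus(\DB\cap\DC)$ and $\IC\setminus(\DB\cap\DC)$ respectively. The containments $\DB\cap\DC\subseteq\IB$ and $\DB\cap\DC\subseteq\IC$ are immediate from the property noted in the introduction: every weak language is recognized by both a DBA and a DCA isomorphic to its rightcon automaton.

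For $\sema{\B}\notin\DC$ I would apply the Wagner characterization recalled in the preliminaries: the subset $\{1,2,3\}$ is strongly connected (witness the cycle $1\stackrel{a}{\to}2\stackrel{b}{\to}3\stackrel{a}{\to}1$, using the gray edge $3\stackrel{a}{\to}1$) and is rejecting, while the full state set $\{0,1,2,3\}$ is strongly connected and accepting. This is the forbidden accepting-SCC-with-rejecting-subset configuration for $\DC$. The dual observation, with accepting and rejecting swapped, exhibits an accepting SCC with a rejecting superset and so shows $\sema{\C}\notin\DB$. To establish $\sema{\B}\in\IB$---which, since $\sim_L=\sim_{L^c}$, also yields $\sema{\C}\in\IC$---it suffices to prove that $\sim_{\sema{\B}}$ has exactly four equivalence classes: by Proposition~\ref{prop:det-aut-refine-synt} it has at most four, and once equality is established, the rightcon automaton must be isomorphic to $\B$, whose Büchi condition $F=\{0\}$ then recognizes $\sema{\B}$. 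Using $\epsilon,a,aa,aab$ as representatives of the four states of $\B$, I would exhibit pair-separating $\omega$-suffixes. The key suffix is $(ababc)^\omega$: traced from state $0$ it cycles $0\to 1\to 1\to 2\to 3\to 0$ and visits $0$ infinitely often, whereas traced from state $1$ or $2$ it quickly stabilizes in a loop inside $\{1,2,3\}$ that never returns to $0$; prepending a $c$ or $bc$ prefix then separates the remaining pairs (those involving state $3$, and $1$ vs.\ $2$), because the prefix determines whether the run is first pushed through $3\stackrel{c}{\to}0$ or instead trapped in $\{1,2,3\}$.

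The main obstacle is constructing any distinguishing $\omega$-suffix at all: the automaton has two very short synchronizing words ($aa$ collapses every state to $2$, and $ac$ collapses every state to $1$), so any suffix with distinguishing power must avoid containing either pattern in the runs being compared. The word $(ababc)^\omega$ is designed precisely for this---it threads through six ``unsynchronized'' pair-states in the product-of-runs construction on a length-five cycle that keeps one coordinate periodically at $0$ while the other coordinate remains locked inside $\{1,2,3\}$, which is exactly the asymmetry the four-way distinguishability argument demands.
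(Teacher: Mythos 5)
Your proposal is correct and rests on the same witness as the paper: the four-state automaton of Figure~\ref{fig:twisted-michel} read once as a \dba\ $\B$ and once as a \dca\ $\C$, with $(ababc)^\omega$ as the engine of the state-distinguishing argument and Proposition~\ref{prop:det-aut-refine-synt} supplying the matching upper bound on $|\!\sim_{\sema{\B}}\!|$. (Your three experiments $(ababc)^\omega$, $c(ababc)^\omega$, $bc(ababc)^\omega$ do separate all four states --- the acceptance profiles from states $0,1,2,3$ are $(A,R,R)$, $(R,R,R)$, $(R,R,A)$, $(A,A,A)$ --- where the paper instead uses the two periodic words $(ababc)^\omega$ and $(babca)^\omega$.) You diverge from the paper in two places. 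First, to certify $\sema{\B}\notin\DC$ you invoke Wagner's invariance of the accepting/rejecting SCC-alternation structure, exhibiting the rejecting SCC $\{1,2,3\}\subseteq\{0,1,2,3\}$ accepting; the paper instead appeals to~\cite[Lemma 2]{KupfermanMM06} and checks that no \dca\ embodied in the structure of $\B$ recognizes $\Lang{\B}$. Your route is arguably cleaner since the Wagner criterion is already set up in the preliminaries and is used in exactly this way in the proof of Proposition~\ref{prop:DBcapDCsubsetneqIM}; the paper's route avoids leaning on the invariance theorem. Second, you dispatch the $\IC$ half by observing $\sema{\C}=\sema{\B}^c$ and $\sim_L\,=\,\sim_{L^c}$ rather than rerunning a dual argument, which is fine (the paper proves this as a lemma elsewhere). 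The one soft spot is the inclusion $\DB\cap\DC\subseteq\IB\cap\IC$: you declare it ``immediate from the property noted in the introduction,'' but that property is precisely what this half of the proposition is meant to establish; the paper derives it in a few lines from Staiger's $\DB\cap\DC\subseteq\IM$ by turning a minimal alternation-free \dma\ into a \dba\ (resp.\ \dca) on the same structure. You should supply that short derivation rather than cite the claim back to itself.
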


\begin{proof}
	Assume $L \in \DB\cap\DC$.
	By Proposition~\ref{prop:DBcapDCsubsetneqIM} we have $L \in \IM$. Suppose $\M$ is a minimal  \dma\ for $L$. It follows, as explained in the preliminaries, that in $\M$ there is no alternation between accepting and rejecting SCCs along any inclusion chain.
	Therefore, defining a \dba\ $\B_\M$ from $\M$ by changing the acceptance condition to $\{ q ~|~ q \in S$ for some accepting SCC $S \}$ gives $\Lang{\B_\M}=\Lang{\M}$ and thus $L\in\IB$.  Similarly,  defining a \dca\ $\C_\M$ from $\M$ by changing the acceptance condition to $\{ q ~|~ q \in S$ for some rejecting SCC $S\}$ gives $\Lang{\C_\M}=\Lang{\M}$ and thus $L\in\IC$. This completes the inclusion part.
	
	To see that the inclusion is strict for $\IB$ consider the \dba\ $\B$ from Fig.~\ref{fig:twisted-michel}. By~\cite[Lemma 2]{KupfermanMM06} for every language $L$ recognized by a \dba\ $\B$, if $L$ is also in $\DC$, then a \dca\ embodied in the structure of $\B$ can be defined. Since none of the \dca s embodied in $\B$ accepts the same language as $\Lang{\B}$, it follows that $\Lang{\B} \in \DB\setminus \DC$. Thus $\B\notin\DB\cap\DC$.
	
	Next we show that $\sim_{\Lang{\B}}$ has at least 4 equivalence classes. The $\omega$-word $(ababc)^\omega$ is accepted only from states $0$ and $3$ and the $\omega$-word $(babca)^\omega$ is accepted only from states $0$ and $1$; these two experiments distinguish all 4 states.
	Since by Proposition~\ref{prop:det-aut-refine-synt} $\B$ refines $\sim_{\Lang{\B}}$, it follows that the automaton for $\sim_{\Lang{\B}}$ is isomorphic to $\B$, thus $\Lang{\B}\in\IB$. 
	
	The proof for strictness for $\IC$ is dual, using the \dca\ $\C$ in Fig.~\ref{fig:twisted-michel} and~\cite[Lemma 2]{KupfermanMM06} which states also that for every language $L$ recognized by a \dca\ $\C$ if $L$ is also in $\DB$ then a \dba\ embodied in the structure of $\C$ can be defined.
\end{proof}

Since $\IB \subseteq \DB$ and $\IC \subseteq \DC$, from $\DB \cap \DC \subseteq \IB \cap \IC$ we get that $\DB \cap \DC = \IB \cap \IC$.

\begin{corollary}
	$\DB \cap \DC = \IB \cap \IC$
\end{corollary}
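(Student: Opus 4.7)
The plan is to verify the two inclusions separately, both of which follow essentially from what has already been proved. For the direction $\IB \cap \IC \subseteq \DB \cap \DC$, I would invoke the definitional containments $\IB \subseteq \DB$ and $\IC \subseteq \DC$: any language whose rightcon automaton is isomorphic to a \dba\ recognizing it is, in particular, recognized by a \dba, and likewise for \dca. Intersecting these two inclusions yields $\IB \cap \IC \subseteq \DB \cap \DC$.

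For the reverse direction $\DB \cap \DC \subseteq \IB \cap \IC$, I would appeal to the inclusion portion of the preceding proposition. There it was shown that if $L \in \DB \cap \DC$, then a minimal \dma\ for $L$ has no alternation between accepting and rejecting SCCs along any inclusion chain, and from such a \dma\ one can extract both a \dba\ and a \dca\ with the same state space, witnessing $L \in \IB$ and $L \in \IC$ respectively. Hence $L \in \IB \cap \IC$.

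Combining the two inclusions gives the desired equality. Since the entire argument is a direct consequence of previously stated facts, there is no real obstacle — the statement is essentially a bookkeeping corollary that makes explicit the symmetry already implicit in the preceding proposition.
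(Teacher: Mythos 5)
Your argument is correct and matches the paper's own reasoning exactly: the inclusion $\IB \cap \IC \subseteq \DB \cap \DC$ follows from the definitional containments $\IB \subseteq \DB$ and $\IC \subseteq \DC$, and the reverse inclusion is precisely the inclusion part of the preceding proposition. Nothing further is needed.
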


It is shown in~\cite{VanSL95} that any \dma\ can be defined on a \dta\ with the same structure, and that the converse is not true. For instance, the language $(a+b)^*a^\omega$ can be defined by the one-state \dta\ $\T$ in Fig.~\ref{fig:dta}, but no \dma\ with one state accepts it. This shows $\IM$ is strictly contained in $\IT$.

\begin{proposition}[\cite{VanSL95}]\label{prop:ITIM}
	$\IM \subsetneq \IT$
\end{proposition}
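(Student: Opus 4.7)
The plan is to prove the two halves of $\IM \subsetneq \IT$ separately, leaning on the structural result from \cite{VanSL95} already cited in the excerpt and reusing the language $(a+b)^\ast a^\omega$ as the separating example.

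For the inclusion $\IM \subseteq \IT$, I would start from an arbitrary $L \in \IM$ and fix a \dma\ $\M$ witnessing this, so that the underlying automaton of $\M$ is isomorphic to $\R_L$. The cited result of \cite{VanSL95} says that any \dma\ can be recast as a \dta\ on the same underlying automaton (intuitively, by making each Muller set $F_i$ into the set of all transitions whose endpoints lie in $F_i$). Applying this conversion to $\M$ yields a \dta\ $\T$ with $\sema{\T}=\sema{\M}=L$ whose state set (and transition function) coincide with those of $\M$, hence with those of $\R_L$. Therefore $L \in \IT$.

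For strictness I would use $L=(a+b)^\ast a^\omega$ and the one-state \dta\ $\T$ shown in Fig.~\ref{fig:dta}. As noted in the introduction, $\sim_L$ has a single equivalence class, so $\R_L$ is a one-state automaton; the \dta\ $\T$ is visibly isomorphic to $\R_L$, witnessing $L \in \IT$. To see that $L \notin \IM$, it suffices to observe that $L$ cannot be accepted by any one-state \dma: if $\M$ has the single state $q$, then every infinite run visits exactly the set $\{q\}$ infinitely often, so either $\{q\} \in \alpha$ and $\sema{\M}=\Sigma^\omega$, or $\{q\} \notin \alpha$ and $\sema{\M}=\emptyset$. Since $L$ is neither $\Sigma^\omega$ nor $\emptyset$, no one-state \dma\ can accept $L$; a \dma\ isomorphic to $\R_L$ would have to be one-state, so $L \notin \IM$.

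There is no real obstacle in this argument: both halves reduce to citing \cite{VanSL95} and to the elementary observation that Muller acceptance cannot distinguish runs that visit the same (singleton) set infinitely often while transition-based acceptance can distinguish the $a$-self-loop from the $b$-self-loop. The only thing to be careful about is that the \dta\ produced by the conversion in the first half really has the same state set as the original \dma, which is immediate from the construction.
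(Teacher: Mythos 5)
Your proof is correct and follows essentially the same route as the paper: the inclusion $\IM\subseteq\IT$ is delegated to the result of \cite{VanSL95} that any \dma\ can be recast as a \dta\ on the same structure, and strictness is witnessed by $(a+b)^*a^\omega$, which has a trivial right congruence, is accepted by the one-state \dta\ of Fig.~\ref{fig:dta}, but is accepted by no one-state \dma. One small caution: your parenthetical sketch of the conversion (turning each Muller set $F_i$ into the single set of \emph{all} transitions with endpoints in $F_i$) is not quite the right construction --- a run can visit every state of $F_i$ infinitely often without using every such transition, so the \dta\ condition must instead contain \emph{every} transition set whose set of incident states lies in $\alpha$ --- but since you, like the paper, rest the inclusion on the citation rather than on this sketch, the argument stands.
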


The last missing part of the puzzle of the inclusions of the subsets \IB, \IC, \IP, \IM, \IT\ is provided in the following proposition  showing that $\IP$ is strictly contained in $\IM$.

\begin{proposition}\label{prop:IMIP}
	$\IP\subsetneq\IM$ 
\end{proposition}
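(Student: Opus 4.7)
The plan is to prove both the containment and its strictness.

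For the containment $\IP \subseteq \IM$, I would first note the standard fact that any DPA can be converted into a DMA with the same underlying transition structure: given a parity coloring $\kappa: Q \to [1..k]$, set the Muller acceptance collection to be $\alpha = \{ S \subseteq Q \mid \min\{\kappa(q) : q \in S\} \text{ is odd}\}$. The resulting DMA accepts the same language since a run's minimum color seen infinitely often is odd if and only if the set of states visited infinitely often belongs to $\alpha$. Thus if $\R_L$ is isomorphic to a DPA accepting $L$, it is also isomorphic to a DMA accepting $L$.

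For strictness, I would use the three-state DMA $\M'$ depicted in Fig.~\ref{fig:inIPnotinIM} with acceptance $\{\{1\},\{2\}\}$ and show that $L := \sema{\M'}$ lies in $\IM \setminus \IP$. Membership in $\IM$ requires showing $\M'$ is isomorphic to $\R_L$, i.e.\ that its three states have pairwise distinct right residuals. State $3$ is a rejecting sink, so $\sema{\M'_3} = \emptyset$. From state $1$ the word $a^\omega$ stays forever in $\{1\}$ and is accepted, whereas from state $2$ the letter $a$ sends the run to state $3$, so $a^\omega \notin \sema{\M'_2}$; by symmetry $b^\omega$ separates states $2$ and $1$ the other way (and either word separates state $3$ from both). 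Combined with Proposition~\ref{prop:det-aut-refine-synt}, which guarantees $\M'$ refines $\R_L$, this shows $\R_L \cong \M'$, so $L \in \IM$.

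To show $L \notin \IP$, I would argue that no parity coloring on the structure of $\M'$ (equivalently, of $\R_L$) can recognize $L$. Any such DPA must agree with $\M'$ on which SCCs are accepting. Since the singletons $\{1\}$ and $\{2\}$ are accepting SCCs, both $\kappa(1)$ and $\kappa(2)$ must be odd (a run trapped in $\{i\}$ has minimum color seen infinitely often equal to $\kappa(i)$, which must be odd). But $\{1,2\}$ is a rejecting SCC (realized, e.g., by $(bc)^\omega$ from state $1$), which forces $\min(\kappa(1),\kappa(2))$ to be even. This contradiction shows that no parity assignment on this three-state structure recognizes $L$, hence $L \notin \IP$.

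The main step is the lower bound argument: one must be sure that $\M'$ is genuinely the rightcon automaton of $L$ (not merely a refinement), and the parity-incompatibility argument rests on this together with the observation that the Muller acceptance on an $\IM$-representative and on any candidate $\IP$-representative must classify the same SCCs as accepting. The alternation of accepting SCCs $\{1\},\{2\}$ inside the rejecting SCC $\{1,2\}$ is exactly the classical obstruction to expressibility by a parity condition at a fixed transition structure, and it is preserved here because the structure is pinned down by $\R_L$.
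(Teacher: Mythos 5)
Your proposal is correct and follows essentially the same route as the paper: the same structure-preserving \dpa-to-\dma\ conversion for the inclusion, the same witness $\M'$, the same ``distinguish all states, then invoke Proposition~\ref{prop:det-aut-refine-synt}'' argument for membership in $\IM$, and the same odd/even coloring contradiction on the SCCs $\{1\}$, $\{2\}$, $\{1,2\}$ for non-membership in $\IP$. One small slip: the structure is not symmetric in $a$ and $b$ (from state $1$ the letter $b$ leads to state $2$, not to the sink), so $b^\omega$ is in fact accepted from state $1$ as well and does \emph{not} separate states $1$ and $2$; this is harmless, since $a^\omega$ already separates $1$ from both $2$ and $3$, and $b^\omega$ separates $2$ from $3$, so all three residuals remain pairwise distinct and the argument goes through.
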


\begin{proof}
	Inclusion follows since any \dpa\ can be converted into a \dma\ on the same structure (by setting an SCC to accepting iff the minimal color in it is odd). We claim that the \dma\ $\M'$ in Figure~\ref{fig:inIPnotinIM} is in $\IM \setminus \IP$. To see that it is in $\IM$, note that $ca^\omega$ distinguishes state $2$ from the other states, and $a^\omega$ distinguishes states $1$ and $3$. Thus the rightcon automaton has $3$ states, and since by Proposition~\ref{prop:det-aut-refine-synt}, $\M'$ refines $\R_{\sema{\M'}}$ we get that they are isomorphic. To see that it is not in $\IP$, note that to define a \dpa\ on the same structure we need to give state $1$ an odd color, so that when the set of states visited inf. often is $\{1\}$ it will accept. For the same reason we need to give state $2$ an odd color. But then, when the set of states visited infinitely often is $\{1,2\}$, the automaton will accept as well, while it needs to reject. 
\end{proof}

These relations are summarized in Figure~\ref{fig:ibicimip-relations} on the left. The last question is then how complex can a language in \IP, \IM, or \IT\ be? We show that such languages can be arbitrarily complex. That is, for every class $\DM_{n,m}^p$ of the Wagner Hierarchy, there is a language $L\in \IM\cap\IP\cap\IT$ that is in $\DM_{n,m}^p$ and not in any proper subclass of the Wagner Hierarchy. Thus, $\IT$, $\IM$ and $\IP$ include classes as complex as can be\footnote{A similar result is mentioned without a proof in  a footnote in~\cite{MalerStaiger97}, with credit to ``N. Gutleben (personal communication)''. }, as measured by the Wagner Hierarchy.\footnote{Due to lack of space we do not include the full definition of the Wagner Hierarchy. For details, we refer the reader to~\cite{Wagner75},\cite[Chapter V]{PerrinPin04}. }

\begin{proposition}
	\label{prop:im-ip-wagner}
	Let $n,m$ be two natural numbers and $p\in\{+,-,\pm\}$.
	Let $\DM_{n,m}^p$ denote the Wagner Hierarchy class with a maximum of $n$ alternations in each inclusion chain starting with polariy $p$, and a sequence of at most $m$ chains of alternating polarity. Then exists a language $L\in\Sigma^\omega$ for $\Sigma =\{a,b\}$ such that 
	$L\in \IT\cap \IM \cap \IP \cap \DM_{n,m}^p$ and $L\notin  \DM_{n\!-\!1,m}^p$ and $L\notin  \DM_{n,m\!-\!1}^p$.
\end{proposition}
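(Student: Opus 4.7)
The plan is to build, for every admissible $(n,m,p)$, an explicit deterministic parity automaton $\P_{n,m}^p$ over $\Sigma=\{a,b\}$ whose underlying graph simultaneously serves as a \dma\ and a \dta\ for the same language, whose structural Wagner complexity is exactly $(n,m,p)$, and whose states are pairwise separated by $\sim_L$. By Proposition~\ref{prop:det-aut-refine-synt} the last property forces $\R_L$ to be isomorphic to $\P_{n,m}^p$, placing $L$ in $\IP\cap\IM\cap\IT$, while Wagner's invariance theorem \cite{Wagner75} pins down its hierarchy level.

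The construction proceeds in three layers. First, recall the canonical chain-of-chains witnesses used to populate $\DM_{n,m}^p$: each consists of $m$ linearly connected gadgets, each gadget a ``staircase'' of $n$ nested SCCs whose alternation pattern starts with polarity $p$. I would realize one staircase using a cycle of length $n+1$ in which a distinguished letter ``climbs'' one step and the other ``drops'' to the base, assigning the $i$-th state parity $i$ (shifted so that the polarity at the bottom matches $p$). Then I would cascade $m$ such staircases by directing a specific finite word from the top of the $j$-th staircase into the bottom of the $(j{+}1)$-st, and adding a single absorbing ``sink'' chain of the opposite polarity reachable only from the final staircase; this guarantees $m$ distinct maximal inclusion chains of alternation length $n$.

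Second, I would reshape the cascade so that every state has a unique right-congruence class. The key trick is to ensure that from each state $q$ there is a distinct ultimately periodic witness $u_q v_q^\omega$ whose acceptance behavior is determined by $q$ alone: concretely, by choosing the connector gadgets so that the shortest $\{a,b\}^*$-word that reaches the top accepting loop of the $j$-th staircase from $q$ is unique to $q$, one obtains for each pair $q\neq q'$ a separating $\omega$-word. Because $\P_{n,m}^p$ is deterministic, Proposition~\ref{prop:det-aut-refine-synt} then forces $\sim_{\sema{\P_{n,m}^p}}$ to have exactly $|Q|$ classes and to be isomorphic to $\P_{n,m}^p$, establishing membership in \IP. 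Reinterpreting the same graph as a Muller automaton (taking $\alpha$ to be those SCCs whose minimum parity is of polarity $p$) yields the \IM\ witness, and then a \dta\ with the same transition structure witnesses \IT\ via Proposition~\ref{prop:ITIM}.

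Finally, invariance of the Wagner index across all Muller acceptors of $L$ gives $L\in\DM_{n,m}^p$; the maximality of the alternation chain inside one staircase rules out $\DM_{n-1,m}^p$, and the presence of $m$ inequivalent chains rules out $\DM_{n,m-1}^p$. The main obstacle I expect is the second layer: over the binary alphabet $\{a,b\}$ it is easy to either (i) accidentally collapse two states under $\sim_L$, forfeiting \IP\ membership, or (ii) shortcut the alternation chains while routing between staircases, lowering the Wagner index. Getting a connector gadget that simultaneously produces pairwise distinct residual $\omega$-languages and preserves the full $(n,m,p)$-alternation pattern is the delicate combinatorial step, and I would calibrate its length based on $n$ and $m$ to force both properties.
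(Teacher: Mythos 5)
Your plan coincides with the paper's own proof: the automaton $\D_{n,m}^{+}$ of Figure~\ref{fig:wag-example-and-bads} is exactly the cascade of staircases over $\{a,b\}$ that you describe ($b$ climbs within a row of $n+1$ states, $a$ resets to the row's base, the top of each row feeds the bottom of the next, and $n^m$ is absorbing), coloured $k$ or $k+1$ according to the row's parity, with states separated by explicit ultimately periodic words such as $(b^{n+1})^{m-\ell}b^{n-k}a^\omega$ and $(b^{n+1})^{m-\ell}b^{n-k}(ab)^\omega$, after which Proposition~\ref{prop:det-aut-refine-synt}, the inclusion $\IM\subset\IT$, and Wagner invariance are invoked just as you propose. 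The one step you flag as delicate and leave open --- exhibiting the concrete separating $\omega$-words over the binary alphabet without shortcutting the alternation chains --- is precisely what the paper's figure and its enumerated distinguishing words supply.
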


\begin{figure}[t]
	\noindent\makebox[\textwidth]{
		\begin{tabular}{lr}
			\hspace{-3mm}
			\scalebox{0.54}{
				\begin{tikzpicture}[->,>=stealth',shorten >=1pt,auto,node distance=2.2cm,semithick,initial text=]

\node[label]            (Aut)               {$\mathlarger{\mathlarger{\mathlarger{\D_{n,m}^{+}}:}}$};
\node[initial, state] (A)  [below of=Aut] {$0^0$};
\node[state]          (B)  [right of=A]   {$1^0$};
\node[state]          (C)  [right of=B]   {$2^0$};
\node[label]          (CP)  [right of=C]   {$\circ\circ\circ$};
\node[state]          (CL)  [right of=CP]   {$n^0$};

\node[state]          (D)  [below of=A, node distance=2.5cm]   {$0^1$};
\node[state]          (E)  [right of=D]   {$1^1$};
\node[state]          (F)  [right of=E]   {$2^1$};
\node[label]          (FP)  [right of=F]   {$\circ\circ\circ$};
\node[state]          (G)  [right of=FP]   {$n^1$};

\node[label]          (H0)  [below of=D, node distance=2.5cm]   {};
\node[label]          (H)  [below of=F, node distance=2.5cm]   {$\circ\circ\circ$};
\node[label]          (H1)  [below of=G, node distance=2.5cm]   {};

\node[state]          (I)  [below of=D, node distance=5cm] {$0^m$};
\node[state]          (J)  [right of=I]   {$1^m$};
\node[state]          (K)  [right of=J]   {$2^m$};
\node[label]          (L)  [right of=K]   {$\circ\circ\circ$};
\node[state]          (M)  [right of=L]   {$n^m$};

\path (A) edge [loop below]          node {$a$} (A)
          edge                       node {$b$} (B)
      (B) edge [bend left=45,above]  node {$a$} (A)
          edge                       node {$b$} (C)
      (C) edge [bend left=62,above]  node {$a$} (A)
          edge                       node {$b$} (CP)
      (D) edge                       node {$b$} (E)
          edge [loop below]          node {$a$} (D)
      (E) edge [bend left=45,above]  node {$a$} (D)
          edge                       node {$b$} (F)
      (F) edge [bend left=62,above]  node {$a$} (D)
          edge                       node {$b$} (FP)
      (M) edge [loop right]          node {$\Sigma$} (M)
;

\path (CL) edge  node [above]{$b$} (D);
\path (G) edge [dotted] node [above]{$b$} (H0);
\path (H1) edge [dotted] node [above]{$b$} (I);

\path (CP) edge node {$b$} (CL);
\path (FP) edge node {$b$} (G);
\path (L) edge node {$b$} (M);

\path (I) edge                       node {$b$} (J)
          edge [loop below]          node {$a$} (I)
      (J) edge [bend left=45,above]  node {$a$} (I)
          edge                       node {$b$} (K)
      (K) edge [bend left=62,above]  node {$a$} (I)
          edge                       node {$b$} (L)
;      
     
\end{tikzpicture}
			}
			
			&
			\hspace{-4mm}
			\scalebox{0.54}{
			\begin{tikzpicture}[->,>=stealth',shorten >=1pt,auto,node distance=2.2cm,semithick,initial text=, initial where=above]
			
			\node[label] (B)   [fontscale=1.5]      {$\B_{bad}:$};
			
			\node[initial, state, accepting]  (Blambda)   [below right of=B] {${\lambda}$};
			\node[state] (B0)    [left  of=Blambda]{${0}$};
			\node[state] (B1)    [above right of=Blambda]{${1}$};
			\node[state] (B2)    [below right of=Blambda]{${2}$};
			
			\path (Blambda) edge  node  {$0$} (B0);
			\path (Blambda) edge [<->] node [above]{$1$} (B1); 
			\path (Blambda) edge [<->]  node [below] {$2$} (B2); 
			\path (B2) edge [loop right]  node {$0,1$} (B2);  
			\path (B1) edge [loop right]  node {$0,2$} (B2); 
			\path (B0) edge [loop left]  node {$0,1,2$} (B0);

			\node[label] (C)   [right of=B, node distance=7cm, fontscale=1.5]      {$\C_{bad}:$};

			\node[initial, state]  (Clambda)   [below right of=C] {${\lambda}$};
			\node[state, accepting] (C0)    [left  of=Clambda]{${0}$};
			\node[state] (C1)    [above right of=Clambda]{${1}$};
			\node[state] (C2)    [below right of=Clambda]{${2}$};
			\node[state, accepting] (C3)    [below right of=C1]{${3}$};

			\path (Clambda) edge  node  {$0$} (C0);
			\path (Clambda) edge [<->] node [above]{$1$} (C1); 
			\path (Clambda) edge [<->]  node [below] {$2$} (C2); 
			\path (C2) edge [loop right]  node {$0,1,3$} (C2); 
			\path (C1) edge [loop right]  node {$0,2$} (C2); 
			\path (C0) edge [loop left]  node {$0,1,2,3$} (C0); 
			\path (Clambda) edge [loop below]  node {$3$} (Clambda); 
			\path (C3) edge [loop right]  node {$0,1,2$} (C3);
			\path (C1) edge [<->] node [above]{$3$} (C3); 

			\node[label] (Di)   [below of=B, node distance=5.5cm]     {};	
			\node[label] (D)   [right of=Di, node distance=2cm, fontscale=1.5]     {$\D_{bad}:$};
			
			\node[initial, state, accepting]  (Dlambda)   [below right of=D] {${\lambda}$};
			\node[state] (D0)    [left  of=Dlambda]{${0}$};
			\node[state, accepting] (D1)    [above right of=Dlambda]{${1}$};
			\node[state, accepting] (D2)    [below right of=Dlambda]{${2}$};
			\node[state] (D3)    [right of=D1]{${3}$};
			\node[state, accepting] (D4)    [below right of=D1]{${4}$};

			\path (Dlambda) edge  node  {$0$} (D0);
			\path (Dlambda) edge [<->] node [above]{$1$} (D1); 
			\path (Dlambda) edge [<->]  node [below] {$2$} (D2); 
			\path (D2) edge [loop right]  node {$0,1,3$} (D2); 
			\path (D1) edge [loop above]  node {$0,2,4$} (D2); 
			\path (D0) edge [loop left]  node {$0,1,2,3,4$} (D0); 
			\path (Dlambda) edge [loop below]  node {$3,4$} (Dlambda); 
			\path (D3) edge [loop right]  node {$0,1,2,3$} (D3);
			\path (D1) edge [->] node [above]{$3$} (D3); 			 
			\path (D1) edge [->] node [above]{$4$} (D4); 			 
			\path (D3) edge [->] node {$4$} (D4); 			 
			\path (D4) edge [loop right]  node {$0,1,2,3,4$} (D4);
			\path (D2) edge [bend left] node {$4$} (D0); 			  				
						
			\end{tikzpicture}}
	\end{tabular}}
	\caption{On the left, a representative example for the Wagner class $\DM_{n,m}^+$. The acceptance condition is $\F = \{ \{ 0^i, 1^i, \ldots, j^i\}~|~ j \mbox{ is odd iff } i \mbox{ is odd} \}$. 
		On the right a \dba\ $\B_{bad}$ and \dca\ $\C_{bad}$ in $\IB \setminus \RB$ and $\IC\setminus \RC$, respectively, and a \dba\ $\D_{bad}$ in $\IB \cap \IC$ that is not respective of its right congruence. }\label{fig:wag-example-and-bads}
\end{figure}
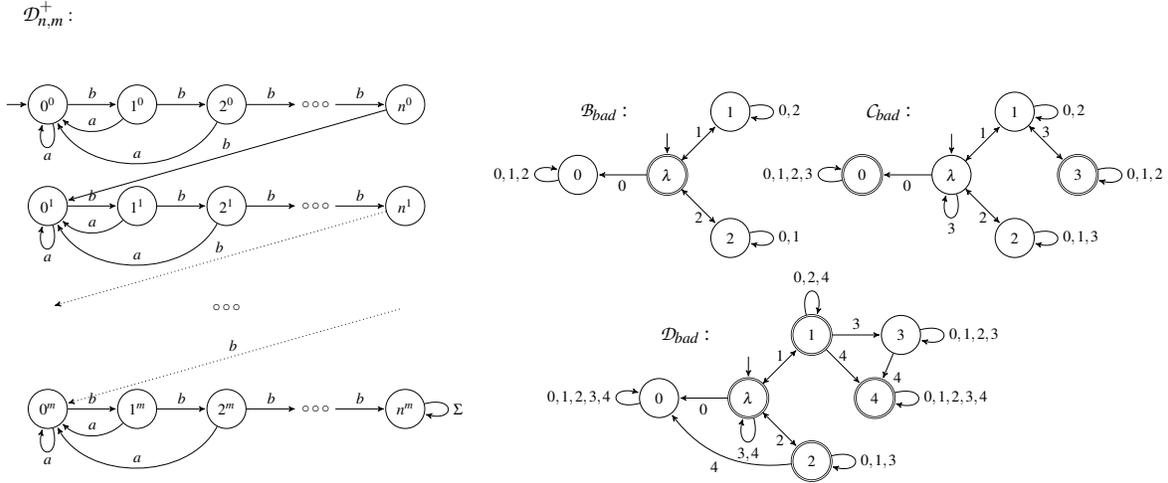

\begin{proof}
	Consider the \dma\ $\D_{n,m}^{+}$ in Figure~\ref{fig:wag-example-and-bads}, with acceptance condition $\F = \{ \{ 0^i, 1^i, \ldots, j^i\}~|~j$  is odd iff $i$ is odd $\}$. For instance, $\{0^0\}\in \F$, $\{0^0,1^0,2^0\}\in \F$, and $\{0^3,1^3\} \in \F$ but $\{1^0\}\notin \F$ and $\{0^0,1^0\}\notin\F$. It strictly belongs to the Wagner hierarchy class $\DM_{n,m}^{+}$. To see that it is in $\IM$, we show for each state, a word that distinguishes it from other states. We fix an order between the states: a state $k^\ell$ is smaller than $k'^{\ell'}$ if either $\ell<\ell'$ or $\ell=\ell'$ and $k<k'$. Thus the last state $n^m$ is the biggest in this order.
	The word $a^\omega$ distinguishes the last state $n^m$ from all states on odd rows, and the word $(ab)^\omega$ distinguishes $m^n$ from all states on even rows. For $k\in[0..n]$, the word $b^{n-k}a^\omega$ distinguishes state $k^m$ from all smaller states on odd rows, and the word $b^{n-k}(ab)^\omega$ distinguishes it from all smaller states on even rows. Finally, for $k\in[0..n]$ and $\ell\in[0..m]$ the word $(b^{n+1})^{m-\ell} b^{n-k}a^\omega$ distinguishes state $k^\ell$ from all smaller states on odd rows, and the word $(b^{n+1})^{m-\ell} b^{n-k}(ab)^\omega$ distinguishes it from all smaller states on even rows. This shows that the rightcon automaton has $(n+1)(m+1)$ states, and thus $\sema{\D_{n,m}^+}\in\IM$. It is also in $\IP$, since we can define a \dpa\ on the same structure, by assigning state $k^\ell$ the color $k$ if $\ell$ is odd, and $k+1$ if $\ell$ is even. It is in $\IT$ since $\IM \subset \IT$.
	
	The proof for  $\DM_{n,m}^{-}$ is symmetric, and the proof for $\DM_{n,m}^{\pm}$ can be easily deduced from this.
\end{proof}

\commentout{
	We have studied families of DFAs (\fdfa s) as a representation for regular $\omega$-languages, and have shown that they have the following appealing properties:  (a) complementation can be done on the same structure and intersection and union can be done on the product structure (b) deterministic \buchi, co\buchi\ and parity automata can be translated to \fdfa s of roughly the same size, and (c) nonemptiness and nonuniversality can be checked in  polynomial time and are NLOGSPACE-complete. 
	
	In addition, \textsc{Fdfa}s were recently shown to be learnable using membership and equivalence queries~\cite{AngluinF14} with very encouraging empirical results on targets generated as random Muller automata~\cite{AngluinF14journal}. However, the learning algorithm is not polynomial in the size of the \fdfa s. We show that non-deterministic \buchi\ automata cannot be polynomially learned from membership and equivalence queries, under cryptographic assumptions. The largest class of languages currently known to be learnable from membership and equivalence queries is $\DB \cap \DC$~\cite{MalerPnueli95}. The respective learning algorithm builds on the property that any language in $\DB\cap\DC$ is isomorphic to its rightcon automaton. We show that there are languages that are isomorphic to their rightcon automaton but are not in $\DB\cap\DC$ and in a quest to find a larger class of languages for which this property holds, we study the classes of languages $\IB$ (resp. $\IC$, $\IP$ and $\IM$) that have a \buchi\ automaton (resp. co\buchi, parity, Muller automaton) that is isomorphic to their right congruence. We think that devising a polynomial learner for these classes is a worthwhile direction for future work. 
}
\section{Respective of the right congruence}
\label{sec:respective}
As mentioned above, one of the motivations for studying classes of languages that are isomorphic to the right congruence is in the context of learning an unknown language. In this context, positive and negative examples ($\omega$-words labeled by their membership in the language) should help a learning algorithm to infer an automaton for the language. Consider the positive example $(ab)^\omega$ for an unknown language $L$. Intuitively, we expect that a minimal automaton for $L$ would have a loop of size $2$ in which the word $ab$ cycles. This is not necessarily the case, as shown by the language $L=(aba+bab)^\omega$, whose minimal \dba\ $\B_{\bowtie}$ is given in Fig~\ref{fig:bowtie}. In the case of regular languages of finite words, if we regard the automaton $\B_{\bowtie}$  as a \dfa, we note that $ab$, $abab$ are negative examples, while $ababab$ is a positive example. From this a learning algorithm can clearly infer the smallest loop on which $ab$ cycles is of length $6$, and not $2$. But in the case of $\omega$-languages there are no negative $\omega$-words that can provide such information. We thus define a class of languages in which if $uv^\omega$ is a positive example for $L$, then a minimal automaton for $L$ has a cycle of length at most $|v|$ in which $uv^\omega$ loops.

\begin{definition}[respective of $\sim_L$]
	
	A language $L$ is said to be \emph{respective} of its right congruence if $\exists n_0\in \naturals.$ $\forall n>n_0.$ $\forall x,u\in\Sigma^*.$  $xu^\omega \in L$ implies $xu^{n} \sim_L xu^{n+1}$. 
\end{definition}

Intuitively, a language that is respective of its right congruence, can ``delay'' entering a loop as much as needed, but once it loops on a periodic part, it loops on the smallest period possible.

Being respective of the right congruence does not entail having an $\omega$-automaton that is isomorphic to the right congruence. 
Any language $L$ with ${|\sim_L|}=1$ is (trivially) respective of its right congruence.
By Proposition~\ref{prop:characterization-for-trivial}, $L = (a+b)^*(aba)^{\omega}$ has ${|\sim_L|}=1$,
but $L$ is not in $\IT$, $\IM$, $\IP$, $\IB$ or $\IC$, because every $\omega$-automaton accepting $L$ requires more than one state.
We thus concentrate on languages which are both isomorphic to the rightcon automaton and respective of their right congruence.
We use $\RB$, $\RC$, $\RP$, $\RM$ and $\RT$ for the classes of languages that are respective of their right congruence and reside in $\IB$, $\IC$, $\IP$, $\IM$ and $\IT$, respectively.
By definition, thus, $\class{IX} \supseteq \class{RX}$ for $\class{X} \in \{ \class{B},\class{C},\class{P},\class{M},\class{T}\}$. We show that these inclusions are strict. 

\begin{proposition}\label{prop:IXstrictlyRX}
	$\IB \supsetneq \RB$, 	$\IC \supsetneq \RC$,  $\IP \supsetneq \RP$, $\IM \supsetneq \RM$ and $\IT \supsetneq \RT$.
\end{proposition}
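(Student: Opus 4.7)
The plan is to produce concrete witness languages using the automata $\B_{bad}$ and $\C_{bad}$ drawn in Figure~\ref{fig:wag-example-and-bads}. Two observations sharply reduce the work. First, being respective of the right congruence is a property of the language $L$ alone (it depends only on $\sim_L$), so once a language fails to be respective it is excluded from \emph{every} $\class{RX}$ class in which it sits. Second, the inclusion chain $\IB \cup \IC \subseteq \IP \subseteq \IM \subseteq \IT$ holds by structure-preserving translations between acceptance conditions (a \dba\ becomes a \dpa\ by giving accepting states color $1$ and the rest color $2$; a \dca\ becomes a \dpa\ by giving $F$ color $0$ and the rest color $1$; a \dpa\ becomes a \dma\ by marking an SCC accepting iff the minimum color in it is odd; and $\IM\subseteq\IT$ is Proposition~\ref{prop:ITIM}). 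Hence a single witness to $\IB \setminus \RB$ simultaneously witnesses $\IP\setminus\RP$, $\IM\setminus\RM$ and $\IT\setminus\RT$, so only the two cases $\IB\setminus\RB$ and $\IC\setminus\RC$ need separate treatment.

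For $\IB\setminus\RB$ I take $L_B = \Lang{\B_{bad}}$ and first argue $L_B\in\IB$. State $0$ is a non-accepting trap, so no $\omega$-word is accepted from it; $2^\omega$ is accepted from $\lambda$ and from $2$ (the run alternates $\lambda\leftrightarrow 2$) but rejected from $1$, which self-loops on $2$; symmetrically, $1^\omega$ is accepted from $\lambda$ and from $1$ but rejected from $2$; finally, $\lambda$ is separated from $1$ by $2^\omega$, from $2$ by $1^\omega$, and from $0$ by any accepted $\omega$-word. Thus $\sim_{L_B}$ has at least four classes; by Proposition~\ref{prop:det-aut-refine-synt} $\B_{bad}$ refines $\sim_{L_B}$, and the equal cardinalities force $\R_{L_B}\cong\B_{bad}$, giving $L_B\in\IB$.

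Next I show $L_B$ is not respective of $\sim_{L_B}$. Take $x=\epsilon$ and $u=1$. The word $xu^\omega = 1^\omega$ is accepted because its run alternates between the accepting state $\lambda$ and state $1$. But $xu^n = 1^n$ reaches $\lambda$ when $n$ is even and reaches state $1$ when $n$ is odd; since $\epsilon\not\sim_{L_B} 1$, we obtain $xu^n\not\sim_{L_B}xu^{n+1}$ for every $n\geq 0$, so no threshold $n_0$ can satisfy the definition of respectiveness. Thus $L_B\in\IB\setminus\RB$, which by the reductions above also places $L_B$ in $\IP\setminus\RP$, $\IM\setminus\RM$ and $\IT\setminus\RT$.

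The case $\IC\setminus\RC$ is handled by the same template applied to $L_C = \Lang{\C_{bad}}$: exhibit distinguishing $\omega$-words to separate the five states of $\C_{bad}$ under $\sim_{L_C}$, giving $L_C\in\IC$ via Proposition~\ref{prop:det-aut-refine-synt}, and then produce an ultimately periodic accepted word whose prefixes $xu^n$ and $xu^{n+1}$ oscillate between two distinct classes of $\sim_{L_C}$. The main obstacle, both here and for $\B_{bad}$, is the bookkeeping of the first step; the asymmetric role of the extra state $3$ and the additional letter $3$ in $\C_{bad}$ makes those separations slightly more delicate, but picking periodic words in the letters on which the automaton cycles back to $\lambda$ will again furnish all the needed distinguishing witnesses.
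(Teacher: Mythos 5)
Your proposal is correct and takes essentially the same route as the paper: it uses the same witnesses $\B_{bad}$ and $\C_{bad}$ together with the inclusion chain $\IB\subseteq\IP\subseteq\IM\subseteq\IT$ to dispose of $\RP$, $\RM$ and $\RT$, and your non-respectiveness witness $(\epsilon,1)$ with $1^\omega$ oscillating between $[\epsilon]$ and $[1]$ is even a bit simpler than the paper's pair $(\epsilon,1012)$. The only shortfall is that the $\IC\setminus\RC$ half is left as a template rather than carried out; the paper completes it by exhibiting the distinguishing words $0^\omega$, $3^\omega$ and $30^\omega$ for the five states of $\C_{bad}$ and reusing the same non-respective pair, and your template does indeed instantiate there (the pair $(\epsilon,1)$ works for $\C_{bad}$ as well), so nothing in the argument fails.
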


\begin{proof}
	Consider the \dba\ $\B_{bad}$ in Fig.~\ref{fig:wag-example-and-bads} on the right. The language  $B_{bad}$ accepted by $\B_{bad}$ is in $\IB\subset\IP \subset \IM \subset \IT$ but it is not respective of its right congruence. To see that it is in $\IB$ take $\epsilon$, $0$, $1$ and $2$ as the representative words for states $\lambda$, $0$, $1$ and $2$ respectively. Note that $(011)^\omega$ distinguishes $1$ from the rest of the representative words, and $(022)^\omega$ distinguishes $2$ from the rest of the representative words. Finally, $(11)^\omega$ distinguishes $\epsilon$ from $0$. The pair $(\epsilon,1012)$ shows that $B_{bad}$ is not respective of its right congruence since $(1012)^\omega \in B_{bad}$ yet for all $n\in\naturals$ we have that $(1012)^{n+1} \not\sim_{B_{bad}} (1012)^n$. 
	
	Consider the \dca\ $\C_{bad}$  in Fig.~\ref{fig:wag-example-and-bads}. The language  $C_{bad}$ accepted by $\C_{bad}$ is in $\IC$ but it is not respective of its right congruence. To see that it is in $\IC$ take $\epsilon$, $0$, $1$, $2$ and $13$ as the representative words for states $\lambda$, $0$, $1$, $2$ and $3$ respectively. Note that $0^\omega$ distinguishes $1$ and $2$ from the rest of the representative words; $3^\omega$ distinguishes $1$ from $2$ and it distinguishes $\epsilon$ from $0$ and $13$; and $30^\omega$ distinguishes $13$ from $0$. The pair $(\epsilon,1012)$ shows that $C_{bad}$ is not respective of its right congruence since $(1012)^\omega \in C_{bad}$ yet for all $n\in\naturals$ we have that $(1012)^{n+1} \not\sim_{C_{bad}} (1012)^n$. 
\end{proof}

Recall that $\DB\cap \DC = \IB \cap \IC$. The \dba\ $\D_{bad}$ in Fig.~\ref{fig:wag-example-and-bads} can be used to show a language in $\IB \cap \IC$  that is not respective of its right congruence.

\begin{proposition}	
	\label{prop:IBandICnotRespective}
	There exists languages in $\IB \cap \IC$ that are not respective of their right congruences. 
\end{proposition}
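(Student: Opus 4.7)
The plan is to verify that the \dba\ $\D_{bad}$ from Fig.~\ref{fig:wag-example-and-bads} witnesses the claim. The first task is to show $\sema{\D_{bad}}\in\IB\cap\IC$, for which I would invoke the corollary $\DB\cap\DC = \IB\cap\IC$ and show $\sema{\D_{bad}}\in\DB\cap\DC$. Membership in $\DB$ is immediate since $\D_{bad}$ is a \dba. For $\DC$, an SCC analysis of $\D_{bad}$ identifies the rejecting sink $\{0\}$, the accepting sink $\{4\}$, and the maximal SCC $\{\lambda,1,2\}$; moreover every sub-SCC of $\{\lambda,1,2\}$ (namely $\{\lambda\}$, $\{1\}$, $\{2\}$, $\{\lambda,1\}$, $\{\lambda,2\}$, and $\{\lambda,1,2\}$ itself) contains at least one state from the accepting set $F=\{\lambda,1,2,4\}$ and is therefore accepting. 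Consequently no subset of an accepting SCC is rejecting, and the Wagner characterisation recalled in the preliminaries yields $\sema{\D_{bad}}\in\DC$.

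The second task is to exhibit the failure of respectiveness, for which I would use the pair $(x,u)=(\epsilon,1012)$. A direct trace of $\D_{bad}$ shows $\delta(\lambda,1012)=2$ and $\delta(2,1012)=\lambda$, so the run on $(1012)^\omega$ oscillates between $\lambda$ and $2$ while passing through $1$ during every other block; all three states of the accepting SCC $\{\lambda,1,2\}$ are visited infinitely often, whence $(1012)^\omega\in\sema{\D_{bad}}$. The parity of $n$ then determines whether $\delta(\lambda,(1012)^n)$ equals $\lambda$ or $2$, and these two states are separated by $4^\omega$: from $\lambda$ the run stays at $\lambda\in F$ because $\delta(\lambda,4)=\lambda$ and so accepts, whereas from $2$ we have $\delta(2,4)=0$ and the run remains at the rejecting sink $0$. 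Hence $(1012)^n\not\sim_{\sema{\D_{bad}}}(1012)^{n+1}$ for every $n\in\naturals$, so no threshold $n_0$ can witness respectiveness.

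The only delicate step is choosing the loop word $u=1012$, which is engineered so that a single iteration traverses the whole accepting SCC while swapping the positions of $\lambda$ and $2$; beyond that, the argument reduces to routine trace-following and SCC bookkeeping, with no further obstacles.
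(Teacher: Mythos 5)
Your proof is correct and follows essentially the same route as the paper's: the same witness $\D_{bad}$, the same SCC analysis to place its language in $\DB\cap\DC=\IB\cap\IC$, and the same pair $(\epsilon,1012)$ (with $4^\omega$ separating $\lambda$ from $2$) to refute respectiveness. The only difference is cosmetic: the paper additionally verifies that $\D_{bad}$ itself is isomorphic to its rightcon automaton via explicit representative words, a step your appeal to the corollary $\DB\cap\DC=\IB\cap\IC$ renders unnecessary.
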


To complete the picture of inclusions between the $\class{RX}$ classes, we establish that $\RM \supsetneq \RP$ and $\RT \supsetneq \RM$.

\begin{proposition}	\label{prop:RMminusRP}
	$\RM \supsetneq \RP$ and $\RT \supsetneq \RM$. 
\end{proposition}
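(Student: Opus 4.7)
The plan is to establish the two inclusions and their strictness separately, using witnesses already depicted in Figure~\ref{fig:inIPnotinIM}.

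First, for the non-strict containments $\RP \subseteq \RM \subseteq \RT$: any \dpa\ can be recast as a \dma\ on the same state structure by declaring an SCC accepting iff its minimum color is odd, and any \dma\ as a \dta\ on the same structure (Proposition~\ref{prop:ITIM}). Both translations preserve the underlying transition graph, so isomorphism with $\R_L$ is preserved, giving $\IP \subseteq \IM \subseteq \IT$. Since being respective of $\sim_L$ is a property of $L$ itself and not of any particular acceptor, this lifts immediately to $\RP \subseteq \RM \subseteq \RT$.

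For $\RM \supsetneq \RP$, I would use $L' = \sema{\M'}$ where $\M'$ is the three-state \dma\ in Figure~\ref{fig:inIPnotinIM}. Proposition~\ref{prop:IMIP} has already shown $L' \in \IM \setminus \IP$, so in particular $L' \notin \RP$. What remains is to verify $L' \in \RM$, i.e. that $L'$ is respective of $\sim_{L'}$. The key observation is that both accepting SCCs of $\M'$ are singletons ($\{1\}$ and $\{2\}$), so any accepting run of $\M'$ on $xu^\omega$ must eventually remain at state $1$ (reading only $a$'s) or at state $2$ (reading only $b$'s). Hence whenever $xu^\omega \in L'$, the word $u$ is a power of $a$ or of $b$, and the sequence of states $\M'(xu^n)$ is eventually constant, stabilizing at the loop state within at most $|\M'|$ iterations. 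Taking $n_0 = |\M'|$ and invoking the isomorphism $\M' \cong \R_{L'}$ yields $xu^n \sim_{L'} xu^{n+1}$ for every $n > n_0$ and every $(x,u)$ with $xu^\omega \in L'$.

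For $\RT \supsetneq \RM$, the witness is $L'' = \sema{\T} = (a+b)^*(a^\omega + b^\omega)$, where $\T$ is the one-state \dta\ on the right of Figure~\ref{fig:inIPnotinIM}. By Proposition~\ref{prop:characterization-for-trivial} applied with $R_1 = a$ and $R_2 = b$, we have $|\sim_{L''}| = 1$, so $\T$ is isomorphic to $\R_{L''}$ and hence $L'' \in \IT$; triviality of $\sim_{L''}$ makes $L''$ vacuously respective of its right congruence, so $L'' \in \RT$. To exclude $L'' \in \RM$ it suffices to show $L'' \notin \IM$: a one-state \dma\ has acceptance condition $\{\{1\}\}$ or $\emptyset$, accepting $\Sigma^\omega$ or $\emptyset$ respectively, neither of which equals $L''$. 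Thus $L'' \in \RT \setminus \RM$.

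The main obstacle is verifying the respective property for $L'$; once the singleton-accepting-SCC analysis of $\M'$ is carried out, everything else reduces to results already established in Propositions~\ref{prop:IMIP}, \ref{prop:ITIM}, and~\ref{prop:characterization-for-trivial}.
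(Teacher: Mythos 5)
Your proposal is correct and follows essentially the same route as the paper: the same two witnesses from Figure~\ref{fig:inIPnotinIM} ($\sema{\M'}$ for $\RM\setminus\RP$ and the one-state \dta\ language $(a+b)^*(a^\omega+b^\omega)$ for $\RT\setminus\RM$), with the paper verifying respectiveness of $\sema{\M'}$ via the explicit expression $(a+bb^*c)^*(a^\omega+b^\omega)$ where you argue structurally from the singleton accepting SCCs --- an equivalent observation. The only cosmetic differences are that you spell out the inclusions $\RP\subseteq\RM\subseteq\RT$ explicitly and derive triviality of the second witness's right congruence from Proposition~\ref{prop:characterization-for-trivial} rather than from the one-state automaton refining $\sim_L$.
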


Figure~\ref{fig:RXinclusions} on the right summarizes the above results. While it is notable that the requirement of being respective of the right congruence constitutes a restriction,  there exist languages which are respective of their right congruence in every class of the Wagner Hierarchy.

\begin{proposition}
	\label{prop:rm-rp-wagner}
	Let $n,m$ be two natural numbers and $p\in\{+,-,\pm\}$.
	Let $\DM_{n,m}^p$ denote the Wagner Hierarchy class with a maximum of $n$ alternations in each inclusion chain, and a sequence of at most $m$ chains of alternating polarity. Then there exists a language $L\in\Sigma^\omega$ for $\Sigma =\{a,b\}$ such that 
	$L\in \DM_{n,m}^p \cap \RM \cap \RP \cap \RT$.
\end{proposition}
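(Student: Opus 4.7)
The plan is to show that the automata $\D_{n,m}^{p}$ used in the proof of Proposition~\ref{prop:im-ip-wagner} already witness the stronger claim, namely that their languages are respective of their right congruence. Since that proof establishes $\sema{\D_{n,m}^{p}} \in \IT \cap \IM \cap \IP \cap \DM_{n,m}^{p}$, the remaining task is to verify respectiveness. I will work through the $p=+$ case; the $p=-$ case is symmetric (flip the parity of the accepting row indices in $\F$), and $p=\pm$ can be obtained by an appropriate splicing of the two, which preserves the structural property that drives the argument.

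The key structural observation I would exploit is that the maximal SCCs of $\D_{n,m}^{+}$ are exactly the rows $S_{\ell} = \{0^{\ell}, 1^{\ell}, \ldots, n^{\ell}\}$ for $\ell < m$, together with the absorbing state $n^{m}$, and that every inter-row transition goes from $n^{\ell}$ on letter $b$ to $0^{\ell+1}$. Hence once a run leaves a row it never returns, and the restriction of the dynamics to each $S_{\ell}$ is: $a$ resets the column index to $0$, while $b$ advances it by one (or exits, if applied at $n^{\ell}$).

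For respectiveness, pick an arbitrary $xu^{\omega} \in \sema{\D_{n,m}^{+}}$. Its run is eventually confined either to the sink $n^{m}$, or to some row $S_{\ell}$. In the sink case, $\D_{n,m}^{+}(xu^{n}) = n^{m}$ for all large $n$, and since $\D_{n,m}^{+} \cong \R_{L}$ by Proposition~\ref{prop:im-ip-wagner}, we immediately get $xu^{n} \sim_{L} xu^{n+1}$. In the row case, the word $u$ read from $\D_{n,m}^{+}(xu^{N})$ stays inside $S_{\ell}$ for all large $N$; if $u$ consisted only of $b$'s, iterating it would eventually push past $n^{\ell}$, contradicting confinement. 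Hence $u$ contains at least one $a$. Writing $u = w_{0} a w_{1} a \cdots a w_{k}$ with each $w_{i} \in \{b\}^{*}$ of length $j_{i}$, and using that every occurrence of $a$ resets the column index to $0$ within the current row, any run of $u$ that stays inside $S_{\ell}$ throughout ends at the single state $j_{k}^{\ell}$, independent of the starting column. Therefore $\D_{n,m}^{+}(xu^{n}) = j_{k}^{\ell}$ for all sufficiently large $n$, giving $xu^{n} \sim_{L} xu^{n+1}$ as required.

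The main obstacle is bookkeeping rather than conceptual: one must confirm that once the run is trapped in $S_{\ell}$, no inner segment of $u$ can push past $n^{\ell}$ on some intermediate iteration even though its value is independent of the entry column. This follows from the confinement hypothesis, which forces $j_{0} \le n - i$ for the specific starting column $i$ in play and $j_{i} \le n$ for every interior segment $w_{i}$. Once this case analysis is in place, the $p = -$ variant reuses it verbatim under the symmetric acceptance condition, and for $p = \pm$ one glues the $+$ and $-$ constructions together in the same row/sink style, preserving the property that every $u$-loop staying inside an accepting SCC terminates at a unique state.
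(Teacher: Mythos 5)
Your proof is correct and follows essentially the same route as the paper's: both reuse the witness $\D_{n,m}^{p}$ from Proposition~\ref{prop:im-ip-wagner} and argue that iterating $u$ stabilizes the reached state, because either the run is absorbed in the sink or $u$ must contain an $a$, whose column-reset makes the endpoint of each confined pass through $u$ depend only on the final block of $b$'s in $u$ (and hence be independent of the iteration count). The paper packages this same observation via the explicit bound $n_0=(m+1)(n+1)$ and a list of parameters determining the reached state, whereas you derive it from eventual confinement to a single row; your bookkeeping of the interior $b$-blocks is, if anything, slightly more careful than the paper's.
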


\begin{proof}
	Consider again the \dma\ $\D_{n,m}^+$ in Fig.~\ref{fig:wag-example-and-bads} and let ${L_{n,m}^+}$ be the language that it recognizes. We have established in the proof of Proposition~\ref{prop:im-ip-wagner} that $\sema{\D_{n,m}^+} \in \IM \cap \IP \cap \IT$. Take $n_0=(m+1)(n+1)$. 
	Consider the state that  $\D_{n,m}^+$ reaches after reading $xu^{n_0}$.
	If this state is $n^m$ then clearly for any $n'>n_0$, $xu^{n'}$ also reaches states $n^m$. Otherwise there is an $a$ following the longest subsequence of $b^*$ in $u$. The state that $\D_{n,m}^+$ reaches after reading $xu^{n_0}$ depends on (a) the the longest subsequence of $b$'s in $x$ (b) the longest subsequence of $b$'s in $u$ and (c) the number of consecutive $b$'s in the rightmost subsequence of $b$'s in $u$. Since the parameter (a) depends only on $x$ and the parameters (b) and (c) remain the same in $u^i$ for any $i\in\naturals$ we have that $xu^{n_0+i}\sim_{L_{n,m}^+} xu^{n_0+i+1}$.
	Thus, the accepted language is respective of its right congruence.
\end{proof}

\paragraph*{Relation to Non-Counting Languages}
The definition of respective of $\sim_L$ is reminiscent of the definition of non-counting languages~\cite{DiekertG08}. 
A language $L\subseteq \Sigma^\omega$ is said to be \emph{non-counting} iff $\exists n_0 \in\naturals.\ \forall n > n_0.\ \forall u,v\in\Sigma^*,\ w\in\Sigma^\omega. uv^nw\in L \iff uv^{n+1}w \in L$. 

\begin{proposition} 
	\label{prop:non-counting-implies-respective}
	If $L$ is non-counting then $L$ is respective of its right-congruence.
\end{proposition}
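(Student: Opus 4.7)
The plan is essentially to observe that the non-counting property is strictly stronger than what is required for being respective, and hence that the implication is almost immediate by unpacking the definitions. Concretely, I would take the same threshold $n_0$ witnessing that $L$ is non-counting as the witness for $L$ being respective of $\sim_L$.

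First I would fix $n > n_0$ and arbitrary $x, u \in \Sigma^*$ with $xu^\omega \in L$ (though the latter hypothesis will play no role; it is needed in the definition of respective but not in the argument). The goal is to establish $xu^n \sim_L xu^{n+1}$, i.e., that for every $w \in \Sigma^\omega$, $xu^n w \in L \iff xu^{n+1} w \in L$. To do this I would simply instantiate the non-counting property with $u \mapsto x$, $v \mapsto u$ and the same $w$: by hypothesis $n > n_0$, so non-counting yields exactly $xu^n w \in L \iff xu^{n+1} w \in L$, which is the desired equivalence. Since $w$ was arbitrary, $xu^n \sim_L xu^{n+1}$, and this holds for every $n > n_0$ and every $x, u \in \Sigma^*$, giving that $L$ is respective of $\sim_L$.

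I would conclude by remarking that the argument actually shows more than the definition demands: the non-counting hypothesis enforces $xu^n \sim_L xu^{n+1}$ for all $x, u$ and all sufficiently large $n$, irrespective of whether $xu^\omega \in L$. In particular, this makes clear that the converse implication fails in general (and indeed the paper will note this), since respective of $\sim_L$ quantifies only over words $x u^\omega$ lying in $L$ and allows arbitrary behavior on words $x u^\omega \notin L$. There is no real obstacle in this proof; the only thing to be careful about is reading the quantifier orders in the two definitions correctly so that the same $n_0$ can be reused.
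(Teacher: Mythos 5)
Your proof is correct and is essentially the same argument as the paper's: both reuse the non-counting threshold $n_0$ and instantiate the non-counting condition with $u \mapsto x$, $v \mapsto u$ to get $xu^n w \in L \iff xu^{n+1} w \in L$ for all $w$. The only cosmetic difference is that the paper phrases this as a proof by contradiction while you argue directly (and correctly observe that the hypothesis $xu^\omega \in L$ is never used).
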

The converse does not hold. The set $(aa)^*b^\omega$ is respective of $\sim_L$ but is not non-counting.

\begin{proposition}
	\label{prop:respective-not-imply-non-counting}
	There exist languages that are respective of $\sim_L$ but are not non-counting. 
\end{proposition}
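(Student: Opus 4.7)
The plan is to use $L = (aa)^*b^\omega$, the candidate suggested just before the proposition, and to verify its two required properties independently. For the failure of non-counting I would take $u = \epsilon$, $v = a$, $w = b^\omega$ and note that $uv^n w = a^n b^\omega$ lies in $L$ exactly when $n$ is even; hence for every candidate threshold $n_0$ there is an even $n > n_0$ with $uv^n w \in L$ and $uv^{n+1}w \notin L$, directly refuting the definition of non-counting.

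For respectiveness, the preparatory step is to spell out $\sim_L$. A direct analysis yields four equivalence classes, representable by $\epsilon$ (the words $a^{2k}$), $a$ (the words $a^{2k+1}$), $b$ (the words $a^{2k}b^m$ with $m \geq 1$), and a dead sink class (every string from which no $\omega$-suffix leads into $L$, e.g., $ab$ or $ba$). Now fix $x,u\in\Sigma^*$ with $xu^\omega \in L$, so $xu^\omega = a^{2k}b^\omega$ for some $k$. Because $xu^\omega$ contains only finitely many $a$'s, the period $u$ contains none, forcing $u = b^p$ for some $p \geq 1$; and then $xb^\omega = a^{2k}b^\omega$ forces $x = a^{2k}b^m$ for some $m \geq 0$. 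For every $n \geq 1$, both $xu^n = a^{2k}b^{m+pn}$ and $xu^{n+1} = a^{2k}b^{m+p(n+1)}$ contain at least one $b$ after the $a^{2k}$ prefix, so both lie in the class represented by $b$ and are therefore $\sim_L$-equivalent. Hence $n_0 = 0$ witnesses the definition.

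No step is a serious obstacle; the only point needing care---and which explains why the definition of respective permits a nontrivial threshold---is the corner case $x = a^{2k}$ (i.e., $m = 0$): there $xu^0$ sits in $[\epsilon]$ while $xu^1$ already lies in $[b]$, so the equivalence $xu^n \sim_L xu^{n+1}$ only begins to hold from $n \geq 1$ onward. Once the four classes of $\sim_L$ are spelled out, both halves of the argument are essentially immediate.
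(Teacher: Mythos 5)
Your proposal is correct and follows essentially the same route as the paper: the same witness language $(aa)^*b^\omega$, the same counterexample $a^nb^\omega$ versus $a^{n+1}b^\omega$ for non-counting, and the same observation that any accepted $xu^\omega$ forces $u=b^p$ so that $xu^n$ and $xu^{n+1}$ fall into the same class $[b]$. Your version merely makes explicit the four classes of $\sim_L$ and the $m=0$ corner case that the paper's terser argument leaves implicit.
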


One of the most commonly used temporal logics is Linear temporal logic (LTL)~\cite{Pnueli77}.
LTL formulas are non-counting~\cite{GabbayPSS80,Wolper83,DiekertG08,Rabinovich14}. But, there are LTL formulas that characterize languages that are not in $\IT$ (and thus not in any of the $\class{I}$ classes). Indeed, the formula $FG\,(a \vee X\,a)$ characterizes the language $L=\Sigma^*(a+\Sigma a)^\omega$ and by Proposition~\ref{prop:characterization-for-trivial}, ${|\sim_L|=1}$.

\section{Closure properties}
\label{subsec:closure}
\vspace{-1mm}
We examine what Boolean closure properties hold or do not hold for these classes of languages recognizable by an automaton isomorphic to the \rightaut\ automaton, and by automata that are also respective of the right congruence.

It is a well known result that weak regular $\omega$-languages are closed under all Boolean operations as stated by the following proposition.

\begin{proposition}[c.f.~\cite{MannaP89}]
	\label{prop:dbcapdc-closed-to-all}
	The class $\DB\cap\DC$ is closed under the Boolean operations complementation, union and intersection.
\end{proposition}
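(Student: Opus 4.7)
The plan is to establish the three closure properties in turn, leaning on the duality $L\in\DB \iff L^c\in\DC$ stated in the preliminaries, together with standard deterministic product constructions.

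For complementation I would argue directly from duality: if $L\in\DB\cap\DC$, then $L\in\DB$ gives $L^c\in\DC$, and $L\in\DC$ gives $L^c\in\DB$, so $L^c\in\DB\cap\DC$. This step is essentially free.

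For intersection, let $L_1,L_2\in\DB\cap\DC$, recognized by \dba s $\B_1,\B_2$ with accepting sets $F^B_1,F^B_2$, and by \dca s $\C_1,\C_2$ with sets $F^C_1,F^C_2$. I would show $L_1\cap L_2\in\DB$ using the standard two-flag product with state space $Q^B_1\times Q^B_2\times\{1,2\}$, where the flag switches from $1$ to $2$ upon visiting $F^B_1$ in the first component, and from $2$ back to $1$ upon visiting $F^B_2$ in the second; the \buchi\ set consists of the flag-flipping transitions of one fixed polarity. To show $L_1\cap L_2\in\DC$, I would take the straightforward product $\C_1\times\C_2$ with co-\buchi\ set $F^C := (F^C_1\times Q^C_2)\cup(Q^C_1\times F^C_2)$; a run stays out of $F^C$ from some point on iff each projection stays out of its own $F^C_i$ from some point on, which is exactly the conjunction of the two co-\buchi\ conditions.

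For union I would use De Morgan together with the two closures just proved: $L_1\cup L_2 = (L_1^c\cap L_2^c)^c$, so complement closure gives $L_1^c,L_2^c\in\DB\cap\DC$, intersection closure gives $L_1^c\cap L_2^c\in\DB\cap\DC$, and one more complement yields the desired membership. (Alternatively, one can construct the \dba\ directly from the product $\B_1\times\B_2$ with accepting set $(F^B_1\times Q^B_2)\cup(Q^B_1\times F^B_2)$ and dualize for the \dca.) The main conceptual point to watch — and the only mild obstacle — is that the two product constructions do not force us to leave the weak class: the \dba\ construction produces a \dba\ for $L_1\cap L_2$ and the \dca\ construction produces a \dca\ for the same language, so membership in $\DB\cap\DC$ is witnessed by \emph{two distinct} automata, and there is no need to exhibit one automaton that is both a \dba\ and a \dca. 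With that observation in hand, all three closures fall out of the constructions above.
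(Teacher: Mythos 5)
Your proof is correct, but it takes a genuinely different route from the paper's. The paper argues entirely at the level of Muller automata, using the \scc-alternation characterization of the weak class recalled in the preliminaries: $L\in\DB\cap\DC$ iff a \dma\ for $L$ has no alternation between accepting and rejecting \scc s along any inclusion chain. Complementation is then just replacing $\alpha$ by its complement on the same structure, and union and intersection are both handled by one product \dma\ whose condition is the disjunction (resp.\ conjunction) of the two conditions, followed by a contradiction argument showing the no-alternation property is inherited by the product. You instead stay at the level of \buchi\ and \cobuchi\ acceptors: duality for complement, the flag product for $\DB$-intersection together with the plain product (union of the \cobuchi\ sets) for $\DC$-intersection, and De Morgan for union. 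The paper's approach is more uniform, produces a \emph{single} automaton whose structure certifies membership in the weak class, and reuses machinery (the alternation-free characterization, which rests on Wagner's invariance result) that the paper needs elsewhere anyway; your approach is more elementary and self-contained, at the cost of witnessing membership by two separate automata and a factor-of-two blowup from the flag. Two small points, neither fatal: your parenthetical alternative for union says to ``dualize for the \dca,'' but the \dca\ for a union is the dual of \dba-intersection and hence needs the flag construction, not the plain product --- the De Morgan route you give as primary already covers this, so nothing is lost; and your \buchi\ condition on flag-flipping \emph{transitions} should be recast as a state condition (e.g.\ by recording in the state that a flip just occurred) to match the paper's state-based definition of \buchi\ acceptance.
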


The classes $\IT$, $\IM$ and $\IP$ are closed under complementation. The other classes that we consider are not closed under complementation. See Fig.~\ref{fig:r-classes-comp-non-closure} for counterexamples.
\begin{proposition}
	\label{prop:im-ip-comp}
	The classes $\IT$, $\IM$ and $\IP$ are closed under complementation.\\
	The classes $\IB$ and $\IC$ are not closed under complementation.
\end{proposition}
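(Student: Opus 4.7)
The plan rests on a simple observation: $\sim_L$ and $\sim_{L^c}$ coincide, since $xw \in L \iff yw \in L$ holds if and only if $xw \in L^c \iff yw \in L^c$ holds. Hence $\R_L$ and $\R_{L^c}$ are the same automaton, and asking whether $L^c$ lies in $\IM$, $\IP$, or $\IT$ reduces to asking whether one can place an appropriate acceptance condition on this common structure. Thus for the three positive cases I would not modify the state space at all, and only manipulate the acceptance condition.

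For $\IM$, if $(\R_L,\alpha)$ with $\alpha \subseteq 2^Q$ recognizes $L$, then $(\R_L, 2^Q \setminus \alpha)$ recognizes $L^c$, because the infinity-set of a run lies outside $\alpha$ exactly when the run is rejected by the original Muller condition. The argument for $\IT$ is identical, replacing state-subsets with transition-subsets. For $\IP$, if $(\R_L,\kappa)$ recognizes $L$, I would define $\kappa'(q) := \kappa(q)+1$; then along any run, $\min \kappa'$ equals $\min \kappa + 1$, which is odd iff $\min \kappa$ is even, so $(\R_L,\kappa')$ recognizes $L^c$. In each of the three cases the underlying graph, and therefore its isomorphism with $\R_L = \R_{L^c}$, is preserved, proving closure.

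For the non-closure claims I would invoke the examples already in the paper. In Fig.~\ref{fig:twisted-michel}, the automaton $\B$ witnesses $\sema{\B} \in \IB \setminus \DC$. Since $\DB$ and $\DC$ are complement-dual, $\sema{\B}^c \in \DC \setminus \DB$, and because $\IB \subseteq \DB$ this gives $\sema{\B}^c \notin \IB$, so $\IB$ is not closed under complementation. The $\IC$ case is symmetric using $\sema{\C}$, which lies in $\IC \setminus \DB$ and whose complement therefore lies outside $\IC$. The entire argument is routine; the only mildly delicate step is the parity color-shift, and the only subtlety for non-closure is remembering that $\IB \subseteq \DB$ so one can discharge the obstruction at the coarser $\DB$/$\DC$ level rather than having to reason directly about the rightcon structure.
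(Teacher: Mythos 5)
Your proof is correct and follows essentially the same route as the paper: the key observation that $\sim_L$ and $\sim_{L^c}$ coincide (so $\R_L$ and $\R_{L^c}$ are isomorphic), followed by complementing the Muller/transition-table condition and shifting the parity colors by one on the unchanged rightcon structure, and, for non-closure, exhibiting a language in $\IB$ whose complement falls outside $\DB\supseteq\IB$. The only difference is the choice of witness for the negative part: the paper uses a dedicated example, $((aa+bb)^*bb)^\omega$ from Fig.~\ref{fig:IB-IC-non-closure}, whereas you reuse the automaton of Fig.~\ref{fig:twisted-michel} already shown to lie in $\IB\setminus\DC$, which is equally valid and slightly more economical since that fact is established earlier in the paper.
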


\begin{proposition}
	\label{prop:IB-IC-comp}
	$\RB$, $\RC$, $\RP$, $\RM$ and $\RT$ are not closed under complementation.
\end{proposition}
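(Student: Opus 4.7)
The plan is to exhibit a single weak language $L$ whose complement witnesses non-closure for all five classes simultaneously. Since complementation preserves right congruence, and since $\IP,\IM,\IT$ are closed under complementation (by Proposition~\ref{prop:im-ip-comp}) while weak languages lie in $\IB \cap \IC$, any weak $L$ automatically has $L^c$ sitting in every $\IX$. It therefore suffices to produce a weak $L \in \RB \cap \RC \cap \RP \cap \RM \cap \RT$ such that $L^c$ fails to be respective of $\sim_L$.

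For the construction I would use a five-state \dma\ $\M$ over $\Sigma = \{a,b\}$ with states $\{\lambda,1,2,3,4\}$, initial state $\lambda$, transitions $\delta(\lambda,a)=1$, $\delta(\lambda,b)=\lambda$, $\delta(1,a)=2$, $\delta(1,b)=3$, $\delta(2,a)=1$, $\delta(2,b)=4$, with $3$ and $4$ as self-looping traps on both letters, and acceptance $\alpha = \{\{3\}\}$. Let $L = \sema{\M}$. The design is deliberate: the only accepting component is the trap $\{3\}$, while the rejecting two-cycle $\{1,2\}$ is visited by iterating the single letter $a$. To show $L \in \IM$ I would distinguish the five states by experiments: $a^\omega$ isolates $3$; $b\,a^\omega$ separates $1$ from $2$ (routing to $3$ vs.\ $4$); and the subtle pair $\lambda,2$ (whose $a$-successor coincides) is separated by $bab\cdot a^\omega$, which keeps $\lambda$ in place but drops $2$ into the rejecting trap $4$. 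Proposition~\ref{prop:det-aut-refine-synt} then gives $\R_L \cong \M$, and equivalent \dba\ ($F=\{3\}$), \dca\ ($F=\{\lambda,1,2,4\}$), \dpa\ (color $1$ on $3$, color $2$ elsewhere) and \dta\ acceptors on the same skeleton place $L$ in all five $\IX$. Respectiveness of $L$ follows because any $xu^\omega \in L$ must enter the trap $3$ within at most four $u$-iterations from $\M(x)$, so a uniform $n_0$ gives $\M(xu^n) = 3 = \M(xu^{n+1})$ for every $n > n_0$, i.e., $xu^n \sim_L xu^{n+1}$. Moreover $L$ is weak, since the accepting SCC $\{3\}$ is maximal and no inclusion chain of SCCs alternates in polarity.

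To finish, I would exhibit $a^\omega \in L^c$ as the witness against respectiveness of $L^c$: the run on $a^\omega$ oscillates between $1$ and $2$, so $a^n$ lies in state $1$ for odd $n$ and state $2$ for even $n$. Since $1 \not\sim_L 2$ by the experiment above, we have $a^n \not\sim_L a^{n+1}$ for every $n \ge 1$. Hence $L^c$ lies in every $\IX$ but in no $\RX$, as required. The main technical obstacle in executing this plan is establishing the five-way separation of $\sim_L$, in particular the $\lambda$ versus $2$ case whose resolution requires the two-step experiment $bab\cdot a^\omega$ above; the remaining verifications (respectiveness, weakness, and the four acceptance translations) are routine once the state structure is fixed.
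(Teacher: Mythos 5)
Your strategy is the same as the paper's: exhibit one deterministic structure that lies in all five $\class{RX}$ classes and whose complement contains a periodic word $u^\omega$ (with $|u|=1$) that oscillates forever on a two-cycle of $\sim_L$-inequivalent states, so that $xu^n\not\sim_{L^c}xu^{n+1}$ for all $n$ and the complement fails to be respective; the paper does exactly this with the four-state \dba\ $\P$ of Fig.~\ref{fig:r-classes-comp-non-closure} (where $b^\omega\in\sema{\P^c}$ oscillates between states $1$ and $2$) and, like you, handles $\RC$ by re-reading the same transition structure as a \dca. Your only substantive deviation is the choice of concrete automaton, and your observation that weakness of $L$ makes the counterexample ``purely about respectiveness'' is a nice extra, though not needed for the statement.

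There is one hole in your verification of $L\in\IM$: the three experiments you list ($a^\omega$, $b\,a^\omega$, $bab\,a^\omega$) give states $2$ and $4$ the identical acceptance profile (reject, reject, reject), so they do not establish that all five states are pairwise $\sim_L$-inequivalent, which is what you need for $\R_L\cong\M$. The pair is in fact separable --- $ab\,a^\omega$ is accepted from state $2$ (via $2\xrightarrow{a}1\xrightarrow{b}3$) but rejected from the trap $4$ --- so the construction survives, but as written the isomorphism claim is not proved. (A cosmetic point: $bab\,a^\omega$ does not ``keep $\lambda$ in place''; it routes $\lambda$ into the accepting trap $3$, which is why the experiment works.) With the extra experiment added, the argument is complete and correct.
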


\begin{proof}
	Consider the \dba\ $\P$ in Fig.~\ref{fig:r-classes-comp-non-closure}. The $\omega$-words $ba(ac)^\omega$, $a(ac)^\omega$, $(ac)^\omega$ and $(ca)^\omega$ distinguish its five states (the four shown in the figure, and the sink state). This shows that $\P\in \IB \subset \IP \subset \IM \subset \IT$. 
	To see that it is respective of its right congruence,
	if $xy^\omega$ is in $\sema{\P}$ then $y$ must traverse the cycle of states 3 and 4.
	Thus for some $n_0$, $\P(xy^{n_0}) = 3$ and $y = (ac)^k$ for some $k \ge 1$, or $\P(xy^{n_0}) = 4$  and $y = (ca)^k$ for some $k \ge 1$.
	In either case, $\P(xy^n) = \P(xy^{n+1})$ for all $n \ge n_0$.
	However, its complement $\P^c$ accepts the word 
	$b^\omega$. But for every $n\in\naturals $ we have $b^n \not\sim_{\sema{\P^c}} b^{n+1}$.
	
	This shows that $\RB$, $\RP$, $\RM$ and $\RT$ are not closed under complementation. Consider the $\dca$ $\P'$ obtained from $\P$ by marking the states $\{1,2,0\}$ where $0$ is the sink state. Then $\P'$ recognizes the same language as $\P$. We get that $\P'$ is in $\IC$, is respective of its right congruence, yet its complement is not respective of its right congruence.
\end{proof}

Aside from the  class $\DB\cap \DC$ none of the classes are closed under union or intersection. The automata in Figures~\ref{fig:IB-IC-non-closure} and~\ref{fig:union-non-closure} are used to refute these closures. The complete proofs are given in the full version.

\begin{proposition}
	\label{prop:all-union}
	\label{prop:all-intersection}
	The classes $\IB$, $\IC$, $\IP$, $\IM$  and $\IT$ are not closed under union or intersection.
	The classes $\RB$, $\RC$, $\RP$, $\RM$ and $\RT$ are not closed under union or intersection.
\end{proposition}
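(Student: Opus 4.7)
The proof consists of exhibiting, for each class and for each of the two Boolean operations, a pair of languages in the class whose combination falls outside it. Two reductions substantially cut the work. First, by Proposition~\ref{prop:im-ip-comp} the classes $\IP$, $\IM$ and $\IT$ are closed under complementation, so via De Morgan's laws closure under union and closure under intersection coincide for each of them; a single witness per class therefore settles both operations. Second, because a \dba\ and the \dca\ on the same transition structure recognise complementary languages, and the right congruences of $L$ and $L^c$ coincide, we have $L \in \IB \iff L^c \in \IC$; consequently a pair $(L_1, L_2) \in \IB \times \IB$ witnessing $L_1 \cup L_2 \notin \IB$ immediately yields a pair $(L_1^c, L_2^c) \in \IC \times \IC$ witnessing $L_1^c \cap L_2^c = (L_1 \cup L_2)^c \notin \IC$, and symmetrically in the other direction. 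The real work therefore reduces to supplying witnesses for union-non-closure of $\IB$, intersection-non-closure of $\IB$, and one witness (as deep in the hierarchy as possible) for one of $\IP$, $\IM$, $\IT$.

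For each witness pair the verification proceeds in two stages. Stage one shows that each $L_i$ lies in the claimed class by exhibiting the automaton from Figures~\ref{fig:IB-IC-non-closure} and~\ref{fig:union-non-closure}, then supplying ultimately periodic $\omega$-words that pairwise distinguish all states under $\sim_{L_i}$; together with Proposition~\ref{prop:det-aut-refine-synt}, which tells us that the given automaton refines the rightcon automaton, this gives the desired isomorphism. Stage two shows that the combined language escapes the class, either by arguing that its right congruence has strictly fewer classes than any acceptor of the required flavour can have (so the combining operation \emph{collapses} equivalence classes that any deterministic acceptor must nevertheless keep separate), or by exhibiting, inside any candidate acceptor, an SCC-alternation pattern along an inclusion chain that the Wagner characterisations of $\DB$, $\DC$ and the parity/Muller classes (recalled in Section~\ref{sec:prelim}) forbid.

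For the $\class{R}$ classes the additional burden is to verify that each witness $L_i$ is respective of its right congruence; this is checked directly on the witnessing automaton by observing that after a bounded number of iterations every cycled word lands in a fixed state of the relevant SCC. Since $\class{RX} \subseteq \class{IX}$, the $\class{I}$-counterexamples serve verbatim whenever they happen to be respective; otherwise a minor modification---uniformising cycle lengths, or inserting a stabilising sink that absorbs offending loops---produces a respective variant with the same separation behaviour, and failure of closure is inherited from the $\class{I}$ argument, or, in borderline cases, from a direct verification that the combined language fails the respective property even when it lies in $\class{IX}$.

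The principal obstacle is the design of the witness pairs. One needs $L_1, L_2$ with informative (and, when required, respective) right congruences, yet such that $L_1 \cup L_2$ (or $L_1 \cap L_2$) is structurally forced to disagree with its own rightcon automaton. The sweet spot demands that $L_1$ and $L_2$ share enough structure for the Boolean combination to merge equivalence classes, while still differing enough that the merged language requires more states to be recognised by an automaton of the chosen flavour. Producing such pairs simultaneously for five acceptance conditions and, in parallel, for their respective versions is where the bulk of the case analysis lies; the automata in the cited figures are engineered precisely to accomplish this balancing act.
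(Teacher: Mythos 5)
Your overall strategy coincides with the paper's: De Morgan transfer for the classes closed under complementation, the duality $L\in\IB \iff L^c\in\IC$ to trade union for intersection between $\IB$ and $\IC$, explicit witness pairs whose Boolean combination collapses the right congruence, and reuse of the same witnesses for the $\class{R}$ classes after a respectiveness check. The genuine gap is in the reduction to ``one witness for one of $\IP$, $\IM$, $\IT$.'' These classes are strictly nested ($\IP\subsetneq\IM\subsetneq\IT$), so a pair $L_1,L_2\in\IP$ with $L_1\cup L_2\notin\IM$ says nothing about $\IT$: the combined language may perfectly well remain in $\IT$. This is exactly what happens with the witnesses of Fig.~\ref{fig:union-non-closure}: $\sema{\M_3}$ (finitely many $a$'s or finitely many $b$'s) has a trivial right congruence, yet it \emph{is} accepted by a one-state \dta\ (declare accepting every transition set that omits the $a$-loop or omits the $b$-loop), so it lies in $\IT$ and cannot witness non-closure of $\IT$. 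For this reason the paper builds a separate, purpose-designed pair of three-state \dta s whose intersection has a trivial right congruence but genuinely needs more than one state as a \dta. Your plan, as stated, would discover this failure only at verification time and has no replacement example.

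A second, smaller gap concerns the $\class{R}$ classes. The De Morgan transfer is unavailable there: $\RP$, $\RM$ and $\RT$ are \emph{not} closed under complementation (Proposition~\ref{prop:IB-IC-comp}), and complementation does not preserve respectiveness, so the complemented witnesses need not be respective. Hence for each $\class{R}$ class you must exhibit respective witnesses for union and for intersection separately; your ``minor modification'' clause is carrying real weight. In the paper this is most visible for $\RT$ under union, which receives its own direct construction ($T_1=\Sigma^*(ba)^\omega+\Sigma^*ac^\omega$ and $T_2=\Sigma^*(ba)^\omega+\Sigma^*bc^\omega$, whose union has a trivial right congruence by Proposition~\ref{prop:characterization-for-trivial}) rather than being inherited from the $\IT$ case.
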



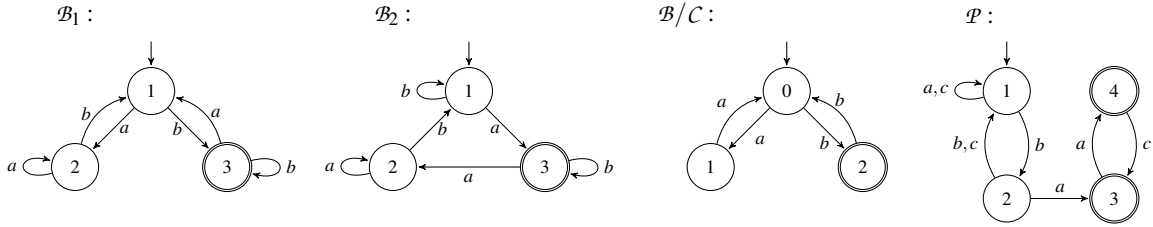
\begin{figure}[t]
	\noindent\makebox[\textwidth]{
		\scalebox{0.65}{
			\begin{tikzpicture}[->,>=stealth',shorten >=1pt,auto,node distance=2.2cm,semithick,initial text=, initial where=above]
			
			\node[label] (B)     [fontscale=1.5]    {$\B_1:$};
			
			\node[initial,state]  (B1)   [below right of=B] {${1}$};
			\node[state] (B2)    [below left  of=B1]{${2}$};
			\node[state,accepting] (B3)    [below right of=B1]{${3}$};
			
			\path (B1) edge  node [right] {$a$} (B2);
			\path (B1) edge  node [left] {$b$} (B3); 
			\path (B2) edge [loop left]  node {$a$} (B2); 
			\path (B2) edge [bend left]  node [left] {$b$} (B1); 
			\path (B3) edge [bend right]  node [right] {$a$} (B1); 
			\path (B3) edge [loop right]  node [right] {$b$} (B3);

			\node[label] (sB)     [right of=B, node distance=6.5cm, fontscale=1.5]    {$\B_2:$};
			
			\node[initial,state]  (sB1)   [below right of=sB] {${1}$};
			\node[state] (sB2)    [below left of=sB1]{${2}$};
			\node[state,accepting] (sB3)    [below right of=sB1]{${3}$};
			
			\path (sB1) edge [loop left]  node {$b$} (sB1); 
			\path (sB1) edge  node [left] {$a$} (sB3); 
			\path (sB3) edge [loop right]  node {$b$} (sB3); 
			\path (sB3) edge  node [below] {$a$} (sB2); 
			\path (sB2) edge [loop left]  node {$a$} (sB2); 
			\path (sB2) edge  node [right] {$b$} (sB1); 
			
			\node[label]          (xLL)   [right of=B, node distance=12.5cm, fontscale=1.5]              {$\B/\C:$};
			\node[initial,state]  (xQ0)   [right of=sB1, node distance=6.5cm] {$0$};
			\node[state]  (xQ1)   [below left of=xQ0] {$1$};
			\node[state,accepting]  (xQ2)   [below right of=xQ0] {$2$};
			
			\path (xQ0) edge  node {$a$} (xQ1); 
			\path (xQ0) edge   node [below] {$b$} (xQ2); 
			\path (xQ1) edge [bend left] node {$a$} (xQ0); 
			\path (xQ2) edge [bend right] node [above]{$b$} (xQ0); 
			
			\node[label]          (yLL)   [right of=B, node distance=18.5cm, fontscale=1.5]  {$\P:$};
			\node[initial,state]  (yQ1)   [right of=sB1, node distance=11cm] {$1$};
			\node[state]  (yQ2)   [ below of=yQ1] {$2$};
			\node[state,accepting]  (yQ3)   [ right of=yQ2] {$3$};
			\node[state,accepting]  (yQ4)   [ above of=yQ3] {$4$};

			\path (yQ1) edge [loop left] node {$a,c$} (yQ1); 
			\path (yQ1) edge [bend left] node {$b$} (yQ2); 
			\path (yQ2) edge [bend left] node {$b,c$} (yQ1); 
			\path (yQ2) edge node {$a$} (yQ3); 
			\path (yQ3) edge  [bend left]  node {$a$} (yQ4); 
			\path (yQ4) edge [bend left] node {$c$} (yQ3); 

			\end{tikzpicture}}
	}
	\caption{On the left examples for non-closure of union for \IB. On the right a \dba\ $\B$ and a \dca\ $\C$ showing $\IB$ and $\IC$ are not closed under complementation (the sink state is not shown).  }
	\label{fig:IB-IC-non-closure}
	\label{fig:ib-union-non-closure}
	\label{fig:r-classes-comp-non-closure}
\end{figure}

\begin{figure}[t]
	\noindent\makebox[\textwidth]{
		\scalebox{0.6}{
			\begin{tikzpicture}[->,>=stealth',shorten >=1pt,auto,node distance=2.2cm,semithick,initial text=]
			
			\node[label] (M)    [fontscale=1.5]     {$\M_1:\{ \{0\}\}$};
			\node[label] (P) [below of=M, node distance=.7cm, fontscale=1.5]  {$\P_1: \kappa(0)=1, \kappa(1)=2$};
			\node[label] (C) [below of=P, node distance=.7cm, fontscale=1.5]  {$\C_1: \{ 1 \}$};
			
			\node[initial,state]  (xQ0)   [below left of=C, node distance=2.2cm] {${0}$};
			\node[state] (xQ1)    [right  of=xQ0]{${1}$};
			
			\path (xQ0) edge [loop below] node {$b,c$} (xQ0); 
			\path (xQ0) edge  [bend left]          node {$a$} (xQ1); 
			\path (xQ1) edge [loop below] node {$a,c$} (xQ1); 
			\path (xQ1) edge  [bend left]          node {$b$} (xQ0); 
			
			\node[label]  (yM) [right of=M, node distance=7.2cm, fontscale=1.5]    {$\M_2:\{ \{1\}\}$};
			\node[label] (yP) [below of=yM, node distance=.7cm, fontscale=1.5]  {$\P_2: \kappa(0)=2, \kappa(1)=1$};
			\node[label] (yC) [below of=yP, node distance=.7cm, fontscale=1.5]  {$\C_2: \{ 0 \}$};
			
			\node[initial,state]  (yQ0)   [below left of=yC, node distance=2.2cm] {${0}$};
			\node[state] (yQ1)    [right  of=yQ0]{${1}$};
			
			\path (yQ0) edge [loop below] node {$b,c$} (yQ0); 
			\path (yQ0) edge  [bend left]          node {$a$} (yQ1); 
			\path (yQ1) edge [loop below] node {$a,c$} (yQ1); 
			\path (yQ1) edge  [bend left]          node {$b$} (yQ0);

			\node[label]  (zM) [right of=yM, node distance=7.2cm, fontscale=1.5]    {$\M_3:\{ \{0\}, \{1\}\}$};
			\node[label] (zP) [below of=zM, node distance=.7cm]  { };
			\node[label] (zC) [below of=zP, node distance=.7cm]  { };
			
			\node[initial,state]  (zQ0)   [below left of=zC, node distance=2.2cm] {${0}$};
			\node[state] (zQ1)    [right  of=zQ0]{${1}$};
			
			\path (zQ0) edge [loop below] node {$b,c$} (zQ0); 
			\path (zQ0) edge  [bend left]          node {$a$} (zQ1); 
			\path (zQ1) edge [loop below] node {$a,c$} (zQ1); 
			\path (zQ1) edge  [bend left]          node {$b$} (zQ0);

			\node[label]  (aM) [right of=zM, node distance=7.5cm]  {}; 
			\node[label] (aP) [below of=aM, node distance=.5cm]  { };
			\node[label] (aC) [below of=aP, node distance=.5cm]  { };
			
			\node[label]  (aQ0)   [below left of=aC, node distance=1.8cm] {}; 
			
			\node[label] (pB) [right of=zM, node distance=4.8cm, fontscale=1.5]    {$\B_{\bowtie}:$};
			\node[initial,state,accepting]  (p0)   [right of=aQ0, node distance=1cm] {${0}$};
			\node[state] (p1)    [above right of=p0, node distance=1.8cm]{${1}$};
			\node[state] (p2)    [above left  of=p0, node distance=1.8cm]{${2}$};
			\node[state] (p3)    [below right of=p0, node distance=1.8cm]{${3}$};
			\node[state] (p4)    [below left  of=p0, node distance=1.8cm]{${4}$};
			
			\path (p0) edge           node [right]{$a$} (p1); 
			\path (p1) edge           node [above] {$b$} (p2); 
			\path (p2) edge           node [left] {$a$} (p0); 
			\path (p0) edge           node {$b$} (p3); 
			\path (p3) edge           node {$a$} (p4); 
			\path (p4) edge           node {$b$} (p0); 
			
			\end{tikzpicture}
	}}
	\caption{On the left, examples for non closure of union for $\IM$, $\IP$, and $\IC$. On the right $\B_{\bowtie}$.}
	\label{fig:union-non-closure}
	\label{fig:bowtie}
\end{figure}
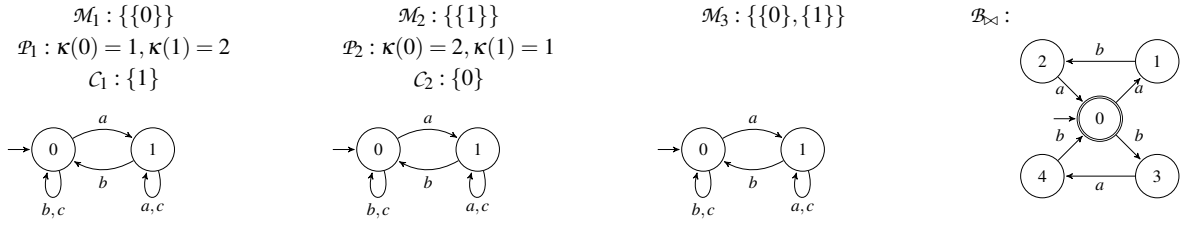


\commentout{
	\begin{table}
		\begin{center}
			\scalebox{0.8}{
				\begin{tabular}{|r||c|c|c|c|c|c|c|c|c|c|}
					\hline 
					Class & $\DB \cap \DC$ & \IB & \IC &  \IP & \IM  & \RB & \RC &  \RP & \RM \\ \hline \hline
					Complementation &  \cmark & \xmark & \xmark &  \cmark & \cmark   & \xmark & \xmark &  \xmark & \xmark\\
					Union &  \cmark & \xmark & \xmark &  \xmark & \xmark    & \xmark & \xmark &  \xmark & \xmark\\ 
					Intersection & \cmark & \xmark & \xmark &  \xmark & \xmark  & \xmark  & \xmark &  \xmark & \xmark\\  \hline
			\end{tabular}} \\ 
		\end{center}
		\caption{Boolean closure properties summary}
		\label{tbl:closure-propoerties}
\end{table}}


\section{Discussion}
\label{sec:discussion}
We have explored properties of the right congruences of regular $\omega$-languages,
characterized when a language has a trivial right congruence, defined classes of
languages that have a fully informative right congruence, and defined an orthogonal property of a language being respective of its right congruence, which is implied by but does not imply the property of being non-counting.
We have shown that there are languages with fully informative right congruences in every class of the infinite Wagner hierarchy, and that this remains true if we consider languages that are also respective of their right congruences.  The (mostly) non-closure results under Boolean operations are not necessarily inimical to learnability.  Our hope is that future research will be able to take advantage of these properties in the search for efficient minimization and learning algorithms for regular $\omega$-languages.



\bibliographystyle{eptcs}
\bibliography{bib}

\end{document}